\newcommand\MYhyperrefoptions{hidelinks,bookmarks=true,bookmarksnumbered=true,
pdfpagemode={UseOutlines},plainpages=false,pdfpagelabels=true,
colorlinks=true,linkcolor={black},citecolor={black},urlcolor={black},
pdftitle={E-Tenon: An Efficient Privacy-Preserving Secure Open Data Sharing Scheme for EHR System},
pdfsubject={Typesetting},
pdfauthor={Michael D. Shell},
pdfkeywords={Computer Society, IEEEtran, journal, LaTeX, paper,
             template}}
\pgfplotsset{compat=1.14}
\newtheorem{definition}{Definition}
\newtheorem{theorem}{Theorem}
\begin{document}
%
\title{E-Tenon: An Efficient Privacy-Preserving Secure Open Data Sharing Scheme for EHR System}
%
%
%
%

\author{Zhihui~Lin,
  Prosanta~Gope$^{\orcidlink{0000-0003-2786-0273}}$,~\IEEEmembership{Senior~Member,~IEEE,}
  Jianting~Ning$^{\orcidlink{0000-0001-7165-398X}}$,~\IEEEmembership{Member,~IEEE,}
  and ~Biplab~Sikdar$^{\orcidlink{0000-0002-0084-4647}}$,~\IEEEmembership{Senior~Member,~IEEE,}
  \IEEEcompsocitemizethanks{
    \IEEEcompsocthanksitem Zhihui Lin and Prosanta Gope are with the Department of Computer Science, University of Sheffield, S10 2TN Sheffield, U.K. \protect\\(e-mail: \href{mailto:zhihuilin111@gmail.com}{zhihuilin111@gmail.com}; \href{mailto:p.gope@sheffield.ac.uk}{p.gope@sheffield.ac.uk}).
    \IEEEcompsocthanksitem Jianting Ning is with College of Computer and Cyber Security, Fujian Normal University, Fuzhou 350117, China (e-mail: \href{mailto:jtning88@gmail.com}{jtning88@gmail.com}).
    \IEEEcompsocthanksitem Biplab Sikdar is with Department of Electrical and Computer Engineering,\\ National University of Singapore, Singapore 117583, Singapore (e-mail: \href{mailto:bsikdar@nus.edu.sg}{bsikdar@nus.edu.sg}).}
  \thanks{}}

%
%

\markboth{}%
{Shell \MakeLowercase{\textit{et al.}}: Bare Advanced Demo of IEEEtran.cls for IEEE Computer Society Journals}
%



\IEEEtitleabstractindextext{%
  \begin{abstract}
    The transition from paper-based information to Electronic-Health-Records (EHRs) has driven various advancements in the modern healthcare-industry. In many cases, patients need to share their EHR with healthcare professionals. Given the sensitive and security-critical nature of EHRs, it is essential to consider the security and privacy issues of storing and sharing EHR. However, existing security solutions excessively encrypt the whole database, thus requiring the entire database to be decrypted for each access request, which is a time-consuming process. On the other hand, the use of EHR for medical research (e.g., development of precision-medicine, diagnostics-techniques), as well as optimisation of practices in healthcare organisations, requires the EHR to be analysed, and for that, they should be easily accessible without compromising the privacy of the patient. In this paper, we propose an efficient technique called E-Tenon that not only securely keeps all EHR publicly accessible but also provides the desirable security features. To the best of our knowledge, this is the \emph{first work} in which an Open Database is used for protecting EHR. The proposed E-Tenon empowers patients to securely share their EHR under multi-level, fine-grained access policies defined by themselves. Analyses show that our system outperforms existing solutions in terms of computational-complexity.
  \end{abstract}

  \begin{IEEEkeywords}
    open database, e-tenon, ABE, multi-level attribute-based encryption, multi-signature.
  \end{IEEEkeywords}}

\maketitle

\IEEEdisplaynontitleabstractindextext

%
\IEEEpeerreviewmaketitle

\IEEEraisesectionheading{\section{Introduction}\label{section:1}}
\IEEEPARstart{W}{ith} the rapid development of Health Information Technology (HIT) and cloud services, a growing number of healthcare organisations are accelerating the implementation of Electronic Health Record (EHR) based systems. These systems enhance their services and core competencies since EHRs can address many limitations of traditional paper-based medical records, such as scalability, accessibility, and persistence. EHRs are often shared across doctors and healthcare providers with patients' consent and typically include a range of sensitive and private information such as patient's identity codes, health history, medical diagnoses and treatment plans. Obviously, leakage of these data can cause embarrassment or even result in life-threatening consequences for patients. Indeed, in reality, despite record levels of security spending by different hospitals, there is still a wide range of malicious cyberattacks intended to penetrate databases and connected systems. This is because cybercriminals find EHRs highly profitable, which motivates them to steal such data by various means. Therefore, designing a system that preserves patient privacy in a robust and efficient manner is imperative. 

\textbf{Motivation:} for the application scenarios mentioned above, many existing schemes are vulnerable and ineffective. For example, a common approach recommended and practised in the industry by many security practitioners is to strictly encrypt databases so that the data is protected to the maximum extent possible, even in the event of a security incident. However, it is worth noting that recent trends and incidents such as the COVID-19 outbreak has caused a sharp increase in the volume of medical information held by hospitals. Therefore, it is inefficient to excessively encrypt the whole database or a majority of the data as it will have a marked impact on the performance of the EHR system. Besides, EHRs are increasingly being used for developing customised and precision medicine regimens, developing new and more accurate techniques for diagnosis and treatment, and optimising medical processes to help healthcare organisations to meet growing medical demands, improve operations, and reduce costs. Such applications require the EHRs to be easily accessible for analysis without compromising privacy.

A naive solution to the above requirements is to use Attribute-Based Encryption (ABE), which provides confidentiality and fine-grained access control. There are two general types of ABE: Ciphertext-Policy Attribute-Based Encryption (CP-ABE) \cite{Bethencourt2007} and Key-Policy Attribute-Based Encryption (KP-ABE) \cite{Goyal2006}. In CP-ABE, data is encrypted with a user-defined access structure, and a user with the relevant attributes can decrypt it \cite{Kaaniche2019}. Contrarily to CP-ABE, KP-ABE encrypts data with a set of descriptive attributes, and a user with a key embedded with an appropriate access structure can decrypt the data \cite{Kaaniche2019}. In this paper, we focus on CP-ABE. As an example, suppose a data owner (patient) wants to share \textit{part} of their EHR with a specific healthcare professional with a specific role and responsibility according to a CP-ABE access policy defined by the patient. One doctor is assigned a specific set of attributes (e.g., \{"doctor", "temporary", "oncology", "top secret"\}) while another doctor may be assigned different attributes (e.g., \{"doctor", "in charge", "dentistry", "confidential"\}). 

A doctor is authorised to access a patient's EHR data if his/her attribute set satisfies the access policy for that \textit{part} of the data. Readers may have noticed that naive CP-ABE does not perfectly support multi-level access control, meaning that data owners have to individually set different access policies for different parts of EHRs, depending on the type and sensitivity. As a result, these access policies will introduce many \textit{duplicate attributes}. The number of access policies is proportional to the number of duplicate attributes. That is, as more access policies are registered in the system, there will be more duplicate attributes, which will undoubtedly increase the storage and communication overheads. Worse still, given that most EHR data is sensitive, encrypting and decrypting large volumes of sensitive EHR using naive CP-ABE can be prohibitively expensive (i.e., it suffers from \textit{linear decryption cost} \cite{8007294}), especially for resource-constrained devices. This, however, yields the following questions: (1) Can we retain the benefits of CP-ABE for fine-grained access control while avoiding duplication of attributes when implementing multi-level access control in EHR systems? (2) Can we protect the confidentiality of EHR data without relying on extensive encryption (i.e., \textit{securely keep most of the EHR data open/in plaintext})?

Another concern is related to the integrity and authenticity of EHRs. Apart from the information provided by the medical staff, nowadays, more and more health data are collected from connected sensors, and wearable medical devices \cite{Gope2016_BSN} over (insecure) networks. All of this is patient-centred data, over which \textit{the patient has primary control}. However: (1) what if the patient takes advantage of his or her primary control to share/upload false data to the EHR system? (2) What if a medical device exploits its automated nature to upload false data to the EHR system without the patient's consent? (3) What if a man-in-the-middle intercepts and tampers with the data? 

\subsection{Our Contributions}
Indeed, strict security requirements appear to be diametrical to the goal of keeping data open. This paper makes a novel attempt to address these seemingly contradicting requirements. It proposes a novel E-Tenon system where data are stored in an open database while maintaining all privacy and security properties. 

One of the core components of the proposed system is the Tenon database (TDB), whose overview is presented in Fig. \hyperref[fig:tdb-general]{1}. Unlike conventional databases, the TDB is an open database consisting of a series of public tables and one secret table. Its main advantage lies in the fact that data protection does not depend on heavy encryption and decryption. Instead, the protection of EHRs is achieved through data preprocessing, maintenance of secret relationships between EHR blocks, and shuffling techniques. Notably, EHRs will be classified into identifiable information and Non Personally Identifiable Information (Non-PII), the latter of which will be tokenised into EHR blocks and can be securely made public. In addition, EHRs in the TDB are constantly shuffled, which makes it extremely difficult for attackers to exploit the open data. The main contributions of this paper are summarised as follows:
\begin{figure}[t]
    \centering
    \includegraphics[width=\columnwidth]{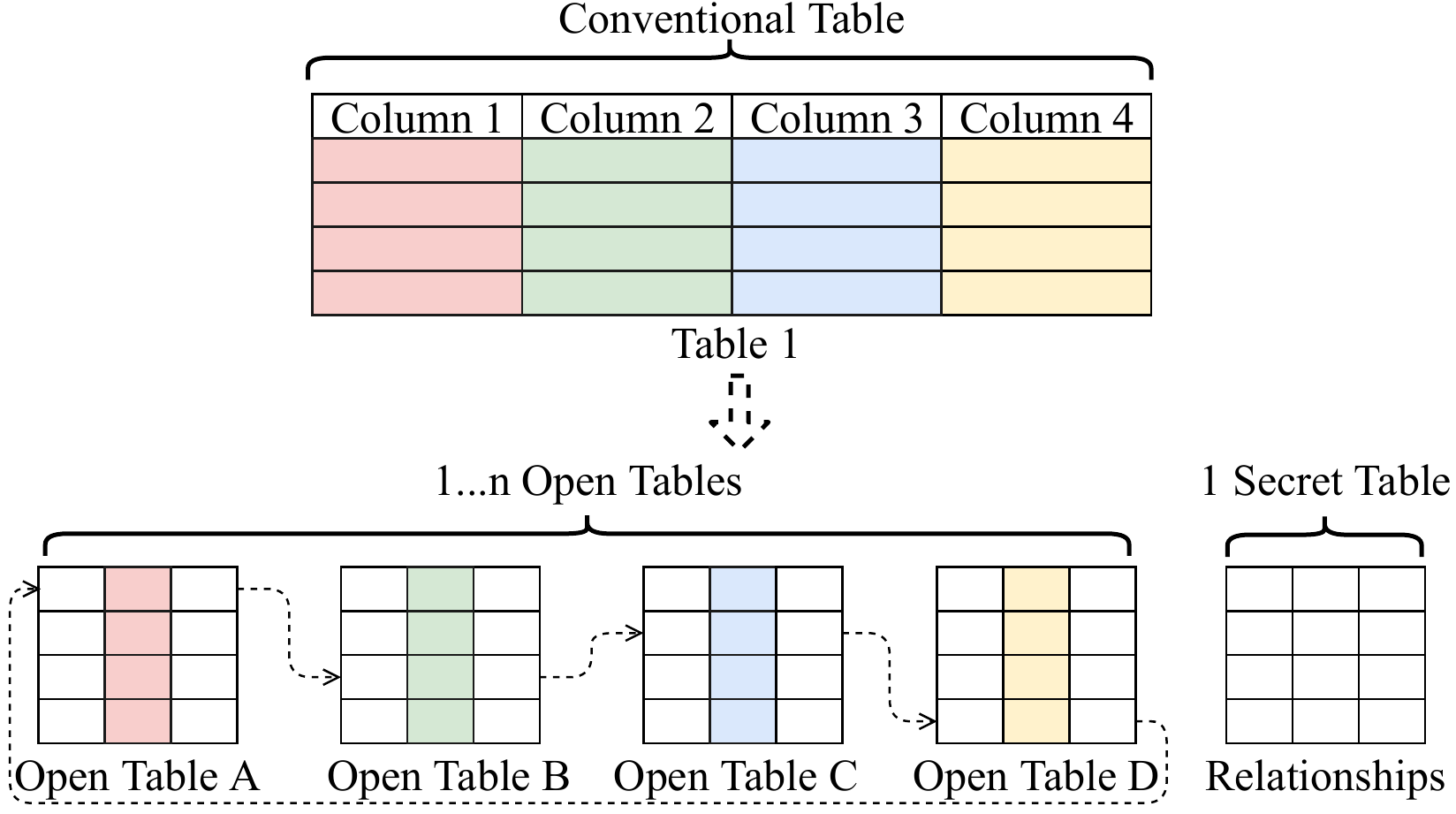}
    \caption{Overview of the proposed tenon database. A conventional table will be segmented into a series sub-tables where the relationship between rows are hidden. It can be revealed partially or fully, depending on the data user's attributes (access rights).}
    \label{fig:tdb-general}
\end{figure}
\begin{itemize}
\item We design an efficient open database in which most of the data are open, and only a minor portion needs to be encrypted. Thus, it requires less computation than other schemes in terms of encryption and decryption.
\item We integrate and extend Multi-Signature and Multi-level Attribute-based Encryption techniques to satisfy all desired security properties.
\item We present data preprocessing and shuffling methods used in conjunction with the proposed E-Tenon system to store and share EHRs securely in an open database setting.
\item We show how to ensure that a medical device and a data owner sign the same content, even if the EHRs have been preprocessed. This guarantees the authenticity and integrity of EHRs.
\end{itemize}

Our work addresses the shortcomings of previous solutions since E-Tenon not only efficiently guarantees multi-level, fine-grained EHR-data sharing but also protects the integrity and authenticity of the EHR, most importantly, \textit{under an open database setting}. It takes only 2.34 milliseconds for signing and verifying the signature, and 0.14 and 0.76 seconds for encryption and decryption of the secret relationship, respectively. To the best of our knowledge, E-Tenon is the first open database-based scheme to provide such a wide range of security and privacy properties. Note that while the focus of this work is on EHR, the concept of E-Tenon would also be applicable in other scenarios which require low-latency access to user data, such as in mobile edge computing environments. 

\subsection{Related Work}
Since medical data security has become a growing public concern, a considerable number of schemes have been published for secure medical data sharing and privacy preservation~\cite{8007294,Rezaeibagha2016_EHR,Bahga2013_EHR,Liu2019,9609621,8713588,shi2014authorized,li2012scalable_EHR,9520256,belguith2018phoabe_EHR,Sun2018_EHR,Zhang2018_EHR,Kumari2020_EHR}. For instance, most research in protecting medical data have emphasised the use of cryptographic methods such as CP-ABE and KP-ABE~\cite{8713588,9520256,9609621,belguith2018phoabe_EHR,Zhang2018_EHR,Green2011}. The system architecture proposed in~\cite{Rezaeibagha2016_EHR} is based on a successor of CP-ABE and Role-Based Access Control (RBAC) to protect EHR stored in the hybrid cloud with direct and indirect access. In Li et al.'s KP-ABE based model~\cite{li2012scalable_EHR}, the data owner needs to trust the key issuer because they are only inserting a set of descriptive attributes into the data using KP-ABE, but they do not know who will be accessing their data \cite{Bethencourt2007}. Xu et al. \cite{9609621} presented a practical dual-policy ABE scheme for EHR systems that combines the advantages of CP-ABE and KP-ABE with support for user revocation. In addition, Belguith et al.~\cite{belguith2018phoabe_EHR} proposed a multi-authority CP-ABE scheme that delegates expensive computing tasks to cloud servers, and their scheme also prevents collusion between the authorities. 

Nevertheless, it is unfortunate that most of the existing solutions in the literature are designed for private database settings, and there is no work which is able to ensure EHR security and patient privacy while keeping data in plaintext form. When every second matters during an emergency, the time-consuming encryption and decryption operations in a healthcare information system may cause delays in accessing patient information (such as medical history) during the golden hour that saves a patient's life. Likewise, as argued in~\cite{Dagher2018}, excessive security may obstruct sensible data use by healthcare providers and patients, and most approaches have failed to properly weigh the patients' right to privacy against the legitimate sharing of data.

Although several similar works mentioned above have used ABE to protect EHR, which is promising for flexible and fine-grained EHR sharing, they are computationally intensive when applied to encrypt the entire database. In addition, most solutions cannot support searching over encrypted data directly. Consequently, to search for relevant patient data in an encrypted database, the system first needs to decrypt the data on the application back-end. Such a burdensome process wastes valuable computing resources. Furthermore, many schemes fail to use digital signatures to ensure data integrity and authenticity properly. For example, \cite{Zhang2018_EHR} allows only one entity to sign the EHR, which grants the entity too much power. Despite some schemes \cite{Wang2018_EHR} allowing multiple entities to sign the data, they cannot guarantee that the same content is being signed honestly by all participants.

In general, to our knowledge, no state-of-the-art work on sharing and protecting EHRs has considered using a secure open database to save the avoidable overhead of encryption and decryption. That said, as the current solutions are built on private databases by default, we are unable to find related work that fully meets our expectations.

\subsection{Organisation}
The rest of the paper is organised as follows. Section \hyperref[section:2]{2} introduces and recapitulates the required mathematical notations, security assumptions and related schemes. In Section \hyperref[section:3]{3}, we present the system model and the corresponding adversarial model. This is followed by the construction of E-Tenon, given in detail in Section \hyperref[section:4]{4}. Next, we prove the security and practicality of the proposed scheme by conducting security and performance analysis in Sections \hyperref[section:5]{5} and \hyperref[section:6]{6}, respectively. Section \hyperref[section:7]{7} of the paper concludes our work in light of all that has been mentioned.

\section{Preliminaries}\label{section:2}
This section introduces and recapitulates several prerequisites, including definitions of some mathematical notations, a multi-level ABE scheme, and a multi-signature scheme.

\subsection{Notations}
We use $r\xleftarrow{\$}\mathbb{R}$ to mean that $r$ is chosen at random from $\mathbb{R}$, and $o\leftarrow A(i_{1},i_{2},\cdots,i_{n})$ to denote an algorithm $A$ that takes $i_{1}$ to $i_{n}$ as input parameters and yields the outcome of its operation $o$. If an algorithm returns $\bot$, it symbolises that the algorithm has failed to perform the expected actions ($v(\bot)=False$). $\mathbb{Z}_p$ is the set of integers modulo $p$, such that $\mathbb{Z}_{p} =\{[ 0]_{p} ,[ 1]_{p} ,\cdots,[ p-1]_{p}\}$. $\mathbb{G}$ is a multiplicative group of prime order \textit{p} where $0\notin \mathbb{G}$ since the multiplicative inverse of 0 does not exist. In addition, we denote $\mathbb{G}\backslash \{1\}$ by $\mathbb{G}^*$.

\subsection{Building Blocks}
More formal definitions are provided below. Bilinear maps are a useful tool for pairing-based cryptography because they conveniently establish relationships between cryptographic groups. As cyclic groups are used in the bilinear map, we first introduce the definition of a cyclic group.

\begin{definition}[Cyclic Group of Prime Order~\cite{Schnorr1991,Bellare2006}]
Let $\mathbb{G}_0 =\langle g\rangle$ be a cyclic group of prime order p where $\langle g\rangle =\left\{g^{n} :n\in \mathbb{Z}\right\}$, generator $g\in \mathbb{G}_0$, and p is a k-bit integer. Note that $\mathbb{G}_0$ can be denoted multiplicatively, and $\langle g\rangle$ is a cyclic subgroup of $\mathbb{G}_0$ generated by g.
\end{definition}

\begin{definition}[Prime Order Bilinear Group~\cite{Bethencourt2007}]
Let $\mathbb{G}_0$ and $\mathbb{G}_1$ be two multiplicative cyclic groups of same prime order p. g is an arbitrary generator $g\xleftarrow{\$}\mathbb{G}_0$. e is a symmetric bilinear map, such that $e:\mathbb{G}_{0} \times \mathbb{G}_{0}\rightarrow \mathbb{G}_{1}$ where $e\left( g^{x} ,g^{y}\right) =e\left( g^{y} ,g^{x}\right) =e( g,g)^{xy} =e( g,g)^{yx}$.
\end{definition}
There are three properties of an efficiently-computable \textit{e} that are worth noting:
\begin{enumerate}
\item \textit{Bilinearity}: $e\left( g^{y} ,g^{x}\right)$ and $e(g,h)^{xy}$ must be equivalent for all $x,y\xleftarrow{\$}\mathbb{Z}_p$ and $g_i,g_j\xleftarrow{\$}\mathbb{G}_0$.
\item \textit{Non-degeneracy}: $e(g,g)$ must not be equal to the identity of $\mathbb{G}_1$.
\item \textit{Computability}: for all $g,h\xleftarrow{\$}\mathbb{G}_0$, there exists an algorithm that can efficiently compute $e(g ,h)$.
\end{enumerate}

\begin{definition}[Discrete Logarithm Assumption~\cite{Bellare2006}]
Let $\mathbb{G}_0$ be a multiplicative cyclic group with a prime order $p$ and a generator $g$. The advantage is formulated as follows when a Probabilistic Polynomial-Time algorithm $\mathcal{A}$ is applied to solve the discrete logarithmic problem in $\mathbb{G}_0$:
\begin{gather*}
\mathrm{Adv}^{dlog}_{\mathbb{G}_{0}}( A) \!=\! \mathrm{Pr}\left[g^{x} =y|g\xleftarrow{\$}\mathbb{G}^{*}_{0};y\xleftarrow{\$}\mathbb{G}_{0};x\xleftarrow{\$} A( y)\right]
\end{gather*}
The assumption holds when ${\mathrm{Adv}^{dlog}_{\mathbb{G}_{0}}(\mathcal{A})}$ is negligible.
\end{definition}

\subsection{Multi-level CP-ABE}
The multi-layered and intertwined doctor-patient relationships across different healthcare providers make it impractical to protect EHRs. Each distinct part of the EHR file may require to be accessed with completely different access rights depending on the purpose of the data user. Therefore, the naive CP-ABE is not fully compatible in our scenario. However, as one of the successors to CP-ABE, ML-ABE fills in the gaps.
\begin{definition}[ML-ABE~\cite{Kaaniche2017}]
ML-ABE consists of four algorithms (\textbf{setup}, \textbf{encrypt}, \textbf{keygen}, \textbf{decrypt}):
\begin{itemize}
\item \textbf{setup}: This algorithm is executed by a trusted authority to generate public parameters $\mathfrak{pp}$ and a master key $\mathfrak{msk}$ according to the security parameter $\mathfrak{K}$.
\item \textbf{encrypt}: It is invoked by the data owner to encrypt the plaintext $\mathbb{M}=\{m_{l}\}_{l\in \{1,c\}}$ with respect to the multi-level security, where c represents the number of security levels. There are four required inputs, $\mathfrak{pp}$, the plaintext $\mathbb{M}$, the access tree $\mathbb{A}$ defined by the data owner over the universe of attributes $\mathbb{S}$, and the set of security levels $\{k_{l}\}_{l\in \{1,c\}}$. It returns the enciphered data $\mathbb{C}:=\{\mathbb{A} ,\forall k_{l} :\{\mathbb{A}^{\prime\prime}_{i}\}_{l} ,\mathbb{C}_{l}\}$. Here we underline that $\{\mathbb{A}^{\prime\prime}_{i}\}_l$ is a set of required sub-trees that must be satisfied by each security level $k_l$ for ${l\in \{1,c\}}$. We also provide the definition of their access structure below.
\item \textbf{keygen}: This algorithm is performed by the trusted authority to generate and issue the decryption key for the users depending on a set of attributes $\mathbb{S}$. It takes as input a set of attributes $\mathbb{S}$, the public parameters $\mathfrak{pp}$, and the master key $\mathfrak{msk}$ generated previously. The output will be the corresponding decryption key $\mathcal{DK}$ for a specific user or entity involved in the system.
\item \textbf{decrypt}: This algorithm is called by the data user to decrypt the ciphertext with respect to the multi-level security. There are three required inputs, the public parameters $\mathfrak{pp}$, the ciphertext $\mathbb{C}$, and the decryption key $\mathcal{DK}$. Note that $\mathbb{C}$ is packed with the relevant access policy $\mathbb{A}$, the security level $k_l$, and a set of required sub-trees $\{\mathbb{A}^{\prime\prime}_{i}\}_l$. It outputs the plaintext $m_l$ by decrypting the corresponding ciphertext $\mathbb{C}_l$ if the deciphering entity's attributes meet the requisites described in $\mathbb{C}$.
\end{itemize}
\end{definition}

\begin{definition}[Access Structure~\cite{Kaaniche2017}]
Let $\mathbb{A}$ be the access structure with multi-threshold security levels $k_l$, $l \in \{1,c\}$. Let $\mathbb{A}^{\prime}_{x}$ be the sub-tree of $\mathbb{A}$ rooted at a particular node x. Also, let $\{\{\mathbb{A}^{\prime\prime}_{i}\}_{l}\}$ be the sub-trees within the outer level. The root node is an AND gate defined as a \text{$k_l$-out-of-c} security levels. $p_l$ subsets of attributes and $n_l$ sub-trees of the root node are required to reconstruct the corresponding secret sharing embedded in the ciphertext $\mathbb{C}$ for security level $k_l$. $\mathbb{A}^{\prime}_{x}(\mathbb{S})=1$ if and only if a set of attributes $\mathbb{S}=\{a_{i}\}_{i\in \{1,l\}}$ satisfies the sub-tree and the number of attributes l is at least as many as the number of children of node x, otherwise $\mathbb{A}^{\prime}_{x}(\mathbb{S})=\bot$.
\end{definition}

\subsection{Multi-Signature}
A Multi-Signature (MS) solution allows a group of signers to co-sign on a common document in a compact manner~\cite{Bellare2006}. As a real-life example, the publication of a report/document often requires the cooperation of multiple colleagues. In order to guarantee the authenticity of the information in the report, each participant needs to sign the file. Therefore, Multi-Signature technology is used to fulfil this type of requirement in the electronic world. Besides, the ABE approach described in the previous section has already reduced the cost of key management by providing one-to-many encrypted access control~\cite{li2013fine}. Thus, we prefer to use a Multi-Signature scheme that is not based on comparatively more burdensome requirements of PKI (e.g., knowledge of secret key hypothesis~\cite{Boldyreva2002}) to enhance the practicality of the proposed E-Tenon system further. Bellare and Neven's MS-BN~\cite{Bellare2006} defined below fits well with our concept.

\begin{definition}[MS-BN~\cite{Bellare2006}]
MS-BN is a scheme consisting of four randomised algorithms (\textbf{Pg}, \textbf{Kg}, \textbf{Sign}, \textbf{Vf}):
\begin{itemize}
\item \textbf{Pg}: This algorithm is executed by a trusted authority to generate global parameters and output $\mathbb{G},p,g$, where $\mathbb{G}$ is a multiplicative cyclic group of prime order p, and g is a generator of $\mathbb{G}$ chosen at random.
\item \textbf{Kg}: This algorithm is called by each signer and co-signer to produce their own key pair used in the signing process. It outputs the signing key $\mathcal{SK}:=r\xleftarrow{\$}\mathbb{Z}_{p}$ randomly chosen from the finite field $\mathbb{Z}_p$ and the related verification key $\mathcal{VK}:=g^{\mathcal{SK}}$.
\item \textbf{Sign}: This algorithm is performed by the signers, and there are three rounds of communication. Each signer will perform some computation in the local scope based on messages shared by all co-signers as well as share their own message with others. It takes as input a signing key of the current signer $\mathcal{SK}_i$, a list of verification keys of all involved signers $\mathbb{V}:=\{\mathcal{VK}_1,\mathcal{VK}_2,\cdots,\mathcal{VK}_n\}$, and a message $\mathfrak{msg}$ to be multi-signed. It outputs the compact signature $\sigma$ consisting of the nonce commitments and the signatures if everyone is honest, otherwise it outputs $\bot$.
\item \textbf{Vf}: This algorithm is executed by the verifiers. There are three required inputs: a message $\mathfrak{msg}$, a compact signature to be verified $\sigma$, and a set of verification keys of all involved signers $\mathbb{V}:=\{\mathcal{VK}_1,\mathcal{VK}_2,\cdots,\mathcal{VK}_n\}$. It returns 1 to indicate the signature $\sigma$ is valid, otherwise it returns $\bot$.
\end{itemize}
\end{definition}
Here we stress two important facts about MS-BN. First, the security of this scheme is guaranteed on the assumption that at least one of the signers is honest~\cite[Sec.~4]{Bellare2006}. Second, the \textit{Kg} algorithm of MS-BN is run independently by each signer to generate the key pair. Such an assumption leads to a breach of security when all the signers are honest-but-curious or dishonest. In view of the increasing sophistication of cyber attacks, any end-user can no longer be undoubtedly trusted. Hence, our model will strengthen MS-BN to accommodate the case where no particular signer is fully trusted. To achieve that, we do not allow the non-trusted signer to perform the \textit{Kg} algorithm without the support of a trusted entity. In other words, the secret keys required for the user to operate the ABE and Multi-Signature related algorithms will be issued by an Attribute Authority (AA) at once where necessary.

\begin{figure*}[t]
    \centering
    \includegraphics[width=0.7\textwidth]{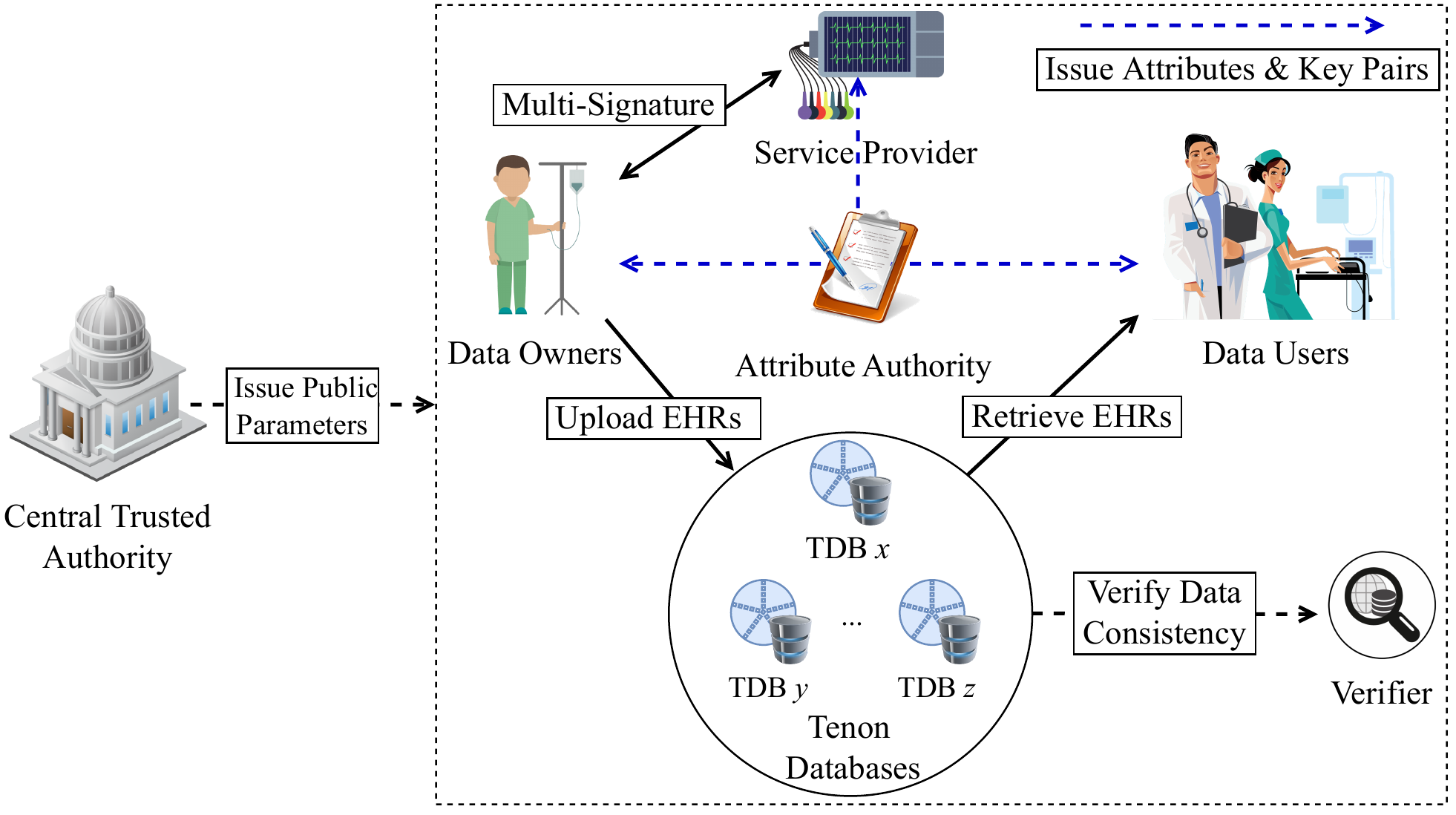}
    \caption{System model of the proposed scheme.}
    \label{sysmodel}
\end{figure*}

\section{System and Adversarial Model}\label{section:3}
In this section, we provide a high-level overview of the proposed system model with respect to entities involved in E-Tenon (as depicted in Fig. \hyperref[sysmodel]{2}). Afterwards, we analyse security considerations along with an adversarial model.

\subsection{System Model}
To establish the system model, we first introduce an efficient open database, then we merge and extend a Multi-Signature scheme MS-BN~\cite{Bellare2006} with an encryption scheme ML-ABE~\cite{Kaaniche2017}. Our system (as depicted in Fig. \hyperref[sysmodel]{2}) ends up with three distinct phases: SETUP, ACCUMULATION and RETRIEVAL, along with seven secure algorithms. In addition, there are six crucial entities: Central Trusted Authority (CTA), Attribute Authority (AA), Data Owner (DO), Service Provider (SP), Data User (DU), and Tenon Database (TDB). Besides, we allow for the option of a seventh participant: Verifier (VER).
\begin{itemize}
\item \textbf{CTA} is a fully trusted entity responsible for generating system-wide public parameters for all participants within the system.
\item \textbf{AA} serves in a similar way as the CTA. It is in charge of managing the user's attributes and issuing secret keys for the user, where appropriate. We note that the state-of-the-art multi-authority ABE systems use several different AAs and make each AA responsible for only one specific attribute. However, it must resist collusion attacks. In our case, we do not require multiple AAs, and we consider the AA as a trusted entity.
\item \textbf{DO} is the actual owner of the EHRs, i.e., the patient. Typically, DOs are concerned about the privacy of their EHRs, and they have the right to control the sharing of their EHRs. However, DOs can also be malicious. For example, DOs may upload incorrect EHRs to mislead data users into making improper treatment decisions. In E-Tenon, DOs can preprocess and selectively encrypt EHRs with self-defined multi-level access policies before sending the data to the database. DOs will also be required to multi-sign their data.
\item \textbf{SP} is an honest-but-curious entity involved in the signing process. It provides unconfirmed EHR to the DO. For example, a smart blood pressure sensor provides readings to the DO (such data remain subject to patient confirmation). However, one exceptional SP who can be trusted is the patient's doctor in charge (they provide patients with officially confirmed diagnostic results and treatment plans).
\item \textbf{TDB} is an honest-but-curious entity responsible for the data management. TDB per se is a distributed open database. The data should be stored as it is, and TDB has no right to decrypt any of the secret relationships. We are inspired by the ancient timber mortise and tenon joints, a strong and stable way of joining multiple elements together by using a proper combination of concave and convex pieces as shown in Fig. \hyperref[tenon]{3}, when designing the TDB and introducing the \emph{Electronic Tenon Structure} for different EHR blocks to be securely joined together. By secure, we mean that no public data can be exploited by unauthorised entities as only data users with the appropriate attributes know the proper way to assemble the relevant EHR blocks.
\begin{figure}[ht]
    \centering
    \includegraphics[width=0.35\textwidth]{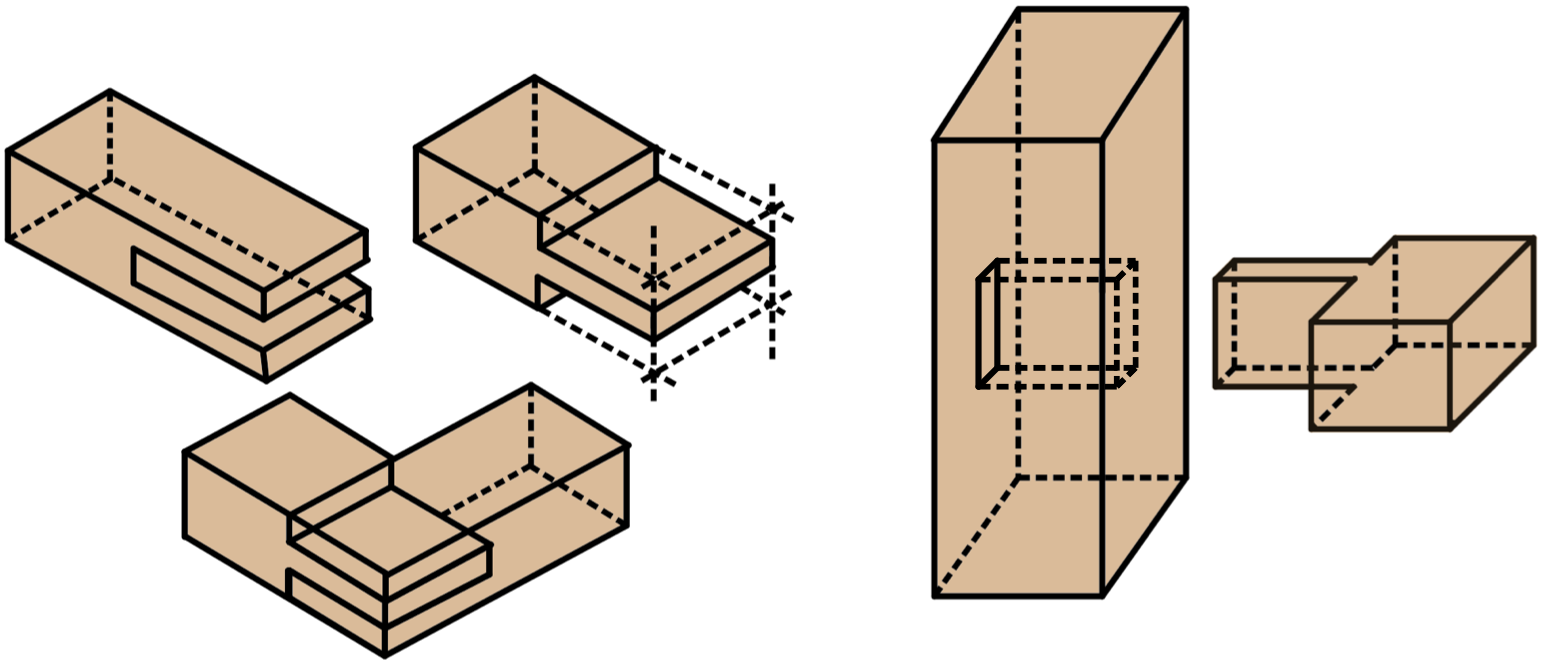}
    \caption{An example of mortise and tenon joints.}
    \label{tenon}
\end{figure}
\item \textbf{DU} is an individual or organisation (e.g., doctor, hospital, research institution, pharmaceutical and medical insurance company) that needs access to patient-owned EHRs in the TDB. DU requires an appropriate level of access, represented by their attributes, to reveal the secret relationships between EHR blocks. For example, a doctor may be able to extract five secret pointers to find and link five EHR blocks. However, a nurse may only be able to decrypt two pointers. Thus, there is a restriction on the amount of data that can be recovered due to the different attributes they hold. Moreover, a DU without the required attributes will be considered malicious when attempting to decrypt the pointers.
\item \textbf{VER} is a trusted participant who is responsible for auditing data consistency between multiple TDBs. The synchronisation of EHR across multiple databases enhances availability and avoids single points of failure.
\end{itemize}

\subsection{Adversarial Model}\label{Adversarial} E-Tenon is intended to be used by patients and a wide range of healthcare institutions. The novelty lies in the fact that most of the EHRs in the TDB are publicly accessible. Besides, we do not restrict EHRs to be transferred only within private networks such as the corporate Local Area Network. Accordingly, the vast majority of EHRs can be transmitted through untrusted public networks such as the Internet. While these considerations greatly increase the applicability and the efficiency of the model, they also expose system interactions and EHRs in transit to a variety of malicious cyber attackers. Therefore, our system must defend against the following threats:
\begin{itemize}
\item \textbf{Confidentiality Threat}: The system may fail to guarantee the secrecy of secret relationships between EHR blocks. For instance, a semi-trusted TDB may intend to discover as much information as possible while complying with the defined protocols. A malicious DU without appropriate permissions may attempt to exploit the open data and reveal the secret relationships.
\item \textbf{Privacy Threat}: DO and DU's identity may be revealed when they interact with a semi-trusted TDB. A malicious DU may be able to infer a relationship between the patient and the data stored in the TDB.
\item \textbf{Integrity and Authenticity Threat}: As EHR is patient-centric data, the patient has primary control over it. However, it remains a challenge to ensure the integrity and authenticity of the EHR provided by patients. One possible attack is that the EHR is tampered with by an intermediary when transmitted over insecure public channels. Even worse, patients themselves may deliberately alter their EHR before uploading in order to obtain \textit{biased} diagnosis and then obtain a large insurance claim (they may also deny that they have uploaded fake data). In this context, although we can use digital signatures to resist these attacks, they may be forged.
\end{itemize}

\subsection{Security Assumptions}
Some of the key assumptions are summarised as follows:
\begin{itemize}
\item DOs and DUs are expected to be educated about privacy rights and obligations. Thus, they will not actively disclose any confidential information to unaffiliated and unauthorised third parties.
\item DOs can apply appropriate access policies to different categories of EHRs according to a \textit{layman-friendly} guidebook provided by the administrator.
\item The semi-trusted TDB and unauthorised DUs cannot infer the type of EHR when each category of data contains at least $\kappa$ different types.
\item The diagnosis will only be provided to the patient after medical experts have confirmed it. Moreover, as a trusted SP, a doctor in charge in an ideal state will not be bribed by anyone to provide a fake diagnosis to the patient.
\end{itemize}

\subsection{Security Games}
Based on the system and adversarial models, we consider the following security games to define the security notion of our E-Tenon system.

1) To prove that E-Tenon is secure against confidentiality and privacy threats, we define a IND-CCA-1 security game between a challenger $\mathcal{C}$ and an adversary $\mathcal{A}$:
\begin{itemize}
    \item \textbf{Setup:} $\mathcal{C}$ runs setup algorithm, and sends the public parameters $\mathfrak{pp}$ to $\mathcal{A}$.
    \item \textbf{Query:} $\mathcal{C}$ initialises an empty table $T$, an integer session counter $j$ starting from zero and an empty set $\mathbb{Q}$. $\mathcal{A}$ can repeatedly query the following:
        \begin{itemize}
            \item \textbf{Create:} $\mathcal{C}$ increments $j$ by 1. $\mathcal{C}$ runs setup to obtain $\mathfrak{pp}$ and a master key $\mathfrak{msk}$, then it runs keyGeneration to extract a decryption key $\mathcal{DK}$ on $\mathbb{S}$ and the corresponding security levels $k_l$. $\mathcal{C}$ finally stores the entry $(j, \mathbb{S}, \mathfrak{pp}, \mathfrak{msk}, \mathcal{DK})$ in $T$ if it is not a duplicate entry.
            \item \textbf{Corrupt:} $\mathcal{A}$ requests the decryption output of a ciphertext $\mathbb{C}$ using $\mathcal{DK}$ on $\mathbb{S}$. $\mathcal{C}$ sets $\mathbb{Q} = \mathbb{Q}\cup\mathbb{S}$ if the $\mathcal{DK}$ for $\mathbb{S}$ exists in $T$ and proceeds.
            \item \textbf{Decrypt:} $\mathcal{C}$ decrypts $\mathbb{C}$ and outputs the results of the decryption to $\mathcal{A}$. Note this oracle can only be accessed before $\mathcal{A}$ receives the challenge ciphertext.
        \end{itemize}
    \item \textbf{Challenge:} $\mathcal{A}$ chooses two plaintext message $\mathbb{M}_0$ and $\mathbb{M}_1$ of the same length. $\mathcal{A}$ also submits a challenge access structure $\mathbb{A}^*$ such that $\mathbb{S}$ does not satisfy $\mathbb{A}^*$ for all $\mathbb{S}\in\mathbb{Q}$. $\mathcal{C}$ then randomly selects a bit $b\in\{0,1\}$ and outputs the encryption results of $\mathbb{M}_b$ under $\mathbb{A}^*$ and $k_l$ to $\mathcal{A}$.
    \item \textbf{Guess:} $\mathcal{A}$ outputs its guess $b^\prime\in\{0,1\}$ for $b$. $\mathcal{A}$ wins the game if $b^\prime=b$.
\end{itemize}
\begin{definition}\label{cca-1}
ML-ABE is CCA-1 secure against confidentiality and privacy threats, if for all PPT adversaries, there is a negligible function in winning the security game defined above, such that $$\mathrm{Adv}^{CCA-1}_{\mathcal{A}}(\lambda)=\mathrm{Pr}[b^\prime=b]=\frac{1}{2}\pm\epsilon$$
\end{definition}

2) To prove that E-Tenon is secure against integrity and authenticity threats, we define a MU-UF-CMA security game between a challenger $\mathcal{C}$ and a forger $\mathcal{F}$:
\begin{itemize}
    \item \textbf{Setup:} $\mathcal{C}$ runs setup and keyGeneration algorithms, and sends the public parameters $\mathfrak{pp}$, a random secret key $\mathcal{SK^*}$ and a public key $\mathcal{VK^*}$ to an honest signer. $\mathcal{VK^*}$ is also shared with $\mathcal{F}$.
    \item \textbf{Attack:} $\mathcal{F}$ initialises a message $\mathfrak{msg}$ to be multi-signed and a set containing the public keys of all co-signers $\mathbb{V} = \{\mathcal{VK}_1,\cdots,\mathcal{VK}_n\}$ where $\mathcal{VK^*} \in \mathbb{V}$. Note that all keys in $\mathbb{V}$ are controlled by $\mathcal{F}$ except for $\mathcal{VK^*}$. Meaning that $\mathcal{F}$ impersonates other co-signers with these keys to run the multiSign algorithm with the honest signer. It either outputs a signature $\sigma$ or a $\bot$. 
    \item \textbf{Forgery:} Once the above phase terminates, $\mathcal{F}$ outputs its forgery $(\mathbb{V}, \mathfrak{msg}, \sigma)$. $\mathcal{F}$ wins the game if the forgery passes the verify algorithm.
\end{itemize}
\begin{definition}\label{uf-cma}
MS-BN is MU-UF-CMA secure against integrity and authenticity threats, if for all PPT adversaries, there is a negligible function in winning the security game defined above, such that $$\mathrm{Adv}^{MU-UF-CMA}_{\mathcal{F}}(\lambda)=\mathrm{Pr}[\mathrm{verify}(\mathbb{V}, \mathfrak{msg}, \sigma) = 1] \leq \epsilon$$
\end{definition}

\section{Concrete Construction}\label{section:4}
Our system incorporates three important phases and seven secure algorithms. We describe the construction details of each phase separately with further specifications in the following subsections. Table \hyperref[natations]{1} lists some essential notations and cryptographic functions we used.

\subsection{Overview}
We propose an electronic tenon system (E-Tenon) that benefits from the effective integration of multi-level attribute-based encryption and multi-signature techniques. Our innovations and extensions to them allow these existing technologies to work appropriately with open databases.
\begin{table}[t]
    \centering
    \caption{Notations and cryptographic functions.\label{natations}}
    \label{tab:my_label}
    \begin{tabular}{|c|c|}
        \hline
        \textbf{Notation}                     & \textbf{Definition}                    \\
        \hline\hline
        $H(\cdot)$                            & One-way hash function                  \\
        \hline
        $\|$                                  & Concatenation operation                \\
        \hline
        $\bot$                                & Bottom constant of propositional logic \\
        \hline
        $\{k_{l}\}_{l\in \{1,c\}}$            & Set of security levels                 \\
        \hline
        $c$                                   & Number of security levels              \\
        \hline
        $\mathcal{SK}$                        & Signing key                            \\
        \hline
        $\mathcal{VK}$                        & Verification key                       \\
        \hline
        $\mathcal{EK}$                        & Encryption key                         \\
        \hline
        $\mathcal{DK}$                        & Decryption key                         \\
        \hline
        $\mathcal{SDK}$                       & Signing and decryption key pair        \\
        \hline
        $\mathcal{VEK}$                       & Verification and encryption key pair   \\
        \hline
        $\mathbb{A}$                          & Patient-defined access structure       \\
        \hline
        $\{\mathbb{A}^{\prime\prime}_{i}\}_l$ & Sub-trees, sub-access structure        \\
        \hline
        $\mathbb{S}$                          & Universe of attributes                 \\
        \hline
        $\mathbb{V}$                          & Verification key set                   \\
        \hline
        $\mathfrak{N}_i$                      & A unique pointer                       \\
        \hline
        $\Phi_i$                              & A tokenised EHR block                  \\
        \hline
        $\sigma$                              & Multi-Signature                        \\
        \hline
        $\gamma,\delta,\mathfrak{r}$          & Random exponents                       \\
        \hline
        $\epsilon$                            & A negligible number                    \\
        \hline
    \end{tabular}
\end{table}

To the best of our knowledge, the existing ABE-based privacy-preserving systems pose many redundant encryption and decryption overheads. However, our solution confidently allows EHRs to be securely made open (without encryption) in the TDB after special preprocessing. In concrete terms, EHR blocks stored in the TDB can only be mapped into meaningful information by deciphering relevant secret pointers. In addition, data shuffling techniques are applied to change the position and order of EHR blocks constantly. This signifies that the open data is presented to DUs at random each time the TDB is accessed. Furthermore, in the original MS-BN scheme, there must be a trusted signing entity involved in the signing process, but we cannot assume that this will always be the case in a safety-critical application. Therefore, our solution does not necessarily need the presence of a fully trusted signer to ensure the unforgeability of a multi-signature, which makes our E-Tenon system more flexible and practical. Eventually, we present steps grounded on sound logic to ensure that the SPs and DOs can always sign the same message. Taken together, these provide us with the ability to manage EHRs in an efficient, flexible and granular manner while maintaining privacy and security at the same time.

\subsection{Workflow of E-Tenon}
The workflow of our E-Tenon system is presented in Fig. \hyperref[workflow]{4} where green entities are fully trusted, red entities may be malicious, and blue entities are honest-but-curious. During the SETUP phase, the CTA and AA will generate and issue the public parameters, attributes and keys required by all system users. In the next stage, named ACCUMULATION, a total of four fundamental algorithms are used. Before the secret relationships between EHR blocks can be established, it first needs to be classified into two main categories: identifiable data and Non-PII data. The EHR preprocessing algorithm will recommend patients to perform minor encryption on identifiable data as well as the secret relationships between EHR blocks. The Non-PII data classified by our algorithm will be made open after preprocessing as it can not be used to trace a patient's identity. Note that when encryption is performed with a patient-defined access policy, it is equivalent to the patient giving consent to those users who satisfy the access policy. Subsequently, the DO and SP multi-sign the data such that the TDB can refuse to store the data if the signature is found to be invalid or forged. Apart from this, signers may also refuse to sign if they believe the data is illegally modified. At the final RETRIEVAL stage, the DUs also have the option to verify the signature of the data, and they can decrypt the pointers at different security levels according to their attributes when they believe that the signature is legitimate. Then, the decrypted pointers can be used to find and combine the relevant EHR blocks in the proper order to recover the correct information.
\begin{figure}[ht]
    \centering
    \includegraphics[width=\columnwidth]{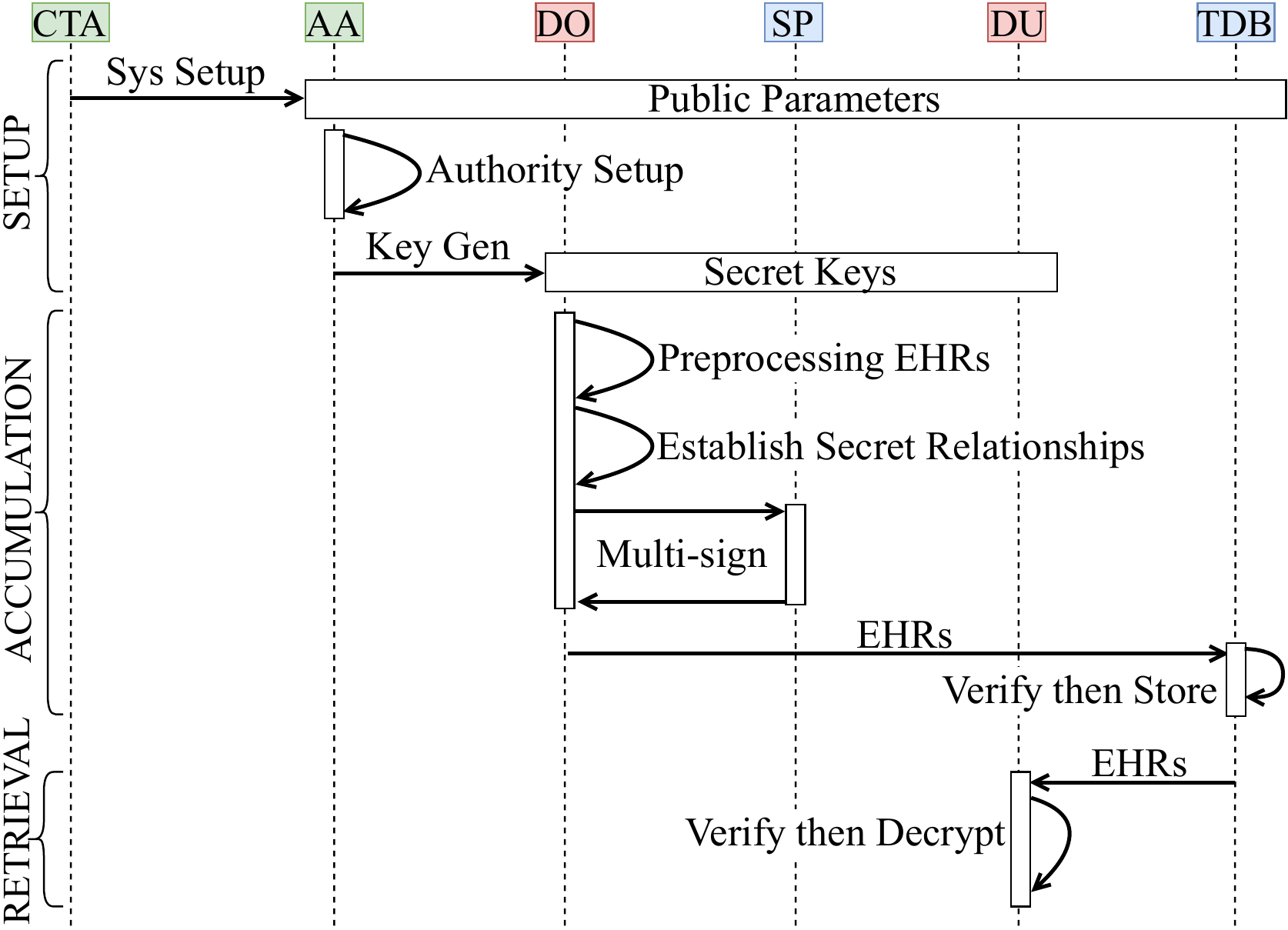}
    \caption{Workflow of the proposed scheme.}
    \label{workflow}
\end{figure}

\subsection{SETUP Phase}
Let $\lambda$ be the implicit security parameter that denotes the size of the cryptographic groups, and let $\mathbb{S}:=\{a_1,a_2,\cdots,a_n\}$ be the universe of the entity's attributes. The following two algorithms need to be administered by the CTA and AA for the initial system and authority setup process of the proposed scheme.

\begin{itemize}
    \item \textbf{setup}($\lambda$): It initially selects a generator $g${\scriptsize$\ \xleftarrow{\$}\ $}$\mathbb{G}_{0}^{*}$ and two unique elements $\gamma$ and $\delta$, at random $\gamma,\delta${\scriptsize$\ \xleftarrow{\$}\ $}$\mathbb{Z}_{p}$. Then, the master key $\mathfrak{msk}$ is defined as $\mathfrak{msk}:=(\delta,g^{\gamma})$. Finally, the public parameters $\mathfrak{pp}$ are grouped into the following seven auxiliary elements $\mathfrak{pp}:=\{\mathbb{G}_0,\mathbb{G}_1,p,g,g^{\delta},e,e(g,g)^{\gamma}\}$. $\mathfrak{pp}$ then is made public at system level and $\mathfrak{msk}$ can be used to create decryption keys according to user attributes.

    \item
          \textbf{keyGeneration}($\mathfrak{pp}$,$\mathfrak{msk}$,$\mathbb{S}$): This algorithm can be executed by either the AA or the signing parties depending on whether a trusted SP is involved in the signing process or not. In the first case, AA uses this algorithm to produce two \textit{distinct} pairs of keys (i.e., $\mathcal{SDK}$, the signing and decryption key pair, and $\mathcal{VEK}$, the verification and encryption key pair) once $\mathfrak{pp}$ and $\mathfrak{msk}$ are successfully generated by the CTA. It starts by choosing one random $\mathfrak{r}$ and a set of randoms $\{\mathfrak{r}_{a}\}$ from the finite field $\mathbb{Z}_p$ where each $a$ is in $\mathbb{S}$ such that $\forall a\in \mathbb{S}:\mathfrak{r} ,\mathfrak{r}_{a}${\scriptsize$\ \xleftarrow{\$}\ $}${\mathbb{Z}_{p}}$. These are used to randomise private keys and prevent DOs from compromising data confidentiality by colluding.
          All necessary keys for the user to operate both ABE and Multi-sig algorithms are formed along the following lines:
          {\small \begin{gather*}
              keys:=\{\mathcal{SDK} =(\mathcal{SK} ,\mathcal{DK}) ,\mathcal{VEK} =(\mathcal{VK} ,\mathcal{EK})\}\\
              \begin{cases}
                  \mathcal{SK} =\mathfrak{r},  \ \mathcal{VK} =g^{\mathcal{SK}},  \ \mathcal{EK} =\mathfrak{pp}                                                                                                                            & \\
                  \mathcal{DK} =\left\{\mathcal{D} =g^{\frac{\gamma +\mathfrak{r}}{\delta }} ,\forall a\in \mathbb{S} :\mathcal{D}_{a} =g^{\mathfrak{r}} .H(a)^{\mathfrak{r}_{a}} ,\mathcal{D}_{a}^{\prime } =g^{\mathfrak{r}_{a}}\right\} &
              \end{cases}
          \end{gather*}}

          \noindent where $\mathcal{VK}$ and $\mathcal{EK}$ can be made public, but $\mathcal{SK}$ and $\mathcal{DK}$ need to be kept secret. In the second case, if a trusted SP is involved in the multi-signature process, the signer may choose to generate his/her own signing key pair without relying on the AA. Despite that, $\mathcal{EK}$ and $\mathcal{DK}$ are still required to be issued by an AA.

\end{itemize}

\begin{figure*}[t]
    \centering
    \tikzset{every picture/.style={line width=0.75pt}} 
    \scalebox{1}{
        \begin{tikzpicture}[x=0.75pt,y=0.75pt,yscale=-1,xscale=1]

            \draw    (50,10) -- (50,285) ;
            \draw    (450,10) -- (450,285) ;
            \draw  [fill={rgb, 255:red, 255; green, 255; blue, 255 }  ,fill opacity=1 ] (22,15) -- (192,15) -- (192,55) -- (22,55) -- cycle ;
            \draw    (192,40) -- (450,40) ;
            \draw [shift={(450,40)}, rotate = 180] [color={rgb, 255:red, 0; green, 0; blue, 0 }  ][line width=0.75]    (10.93,-3.29) .. controls (6.95,-1.4) and (3.31,-0.3) .. (0,0) .. controls (3.31,0.3) and (6.95,1.4) .. (10.93,3.29)   ;
            \draw  [fill={rgb, 255:red, 255; green, 255; blue, 255 }  ,fill opacity=1 ] (300,51) -- (470,51) -- (470,91) -- (300,91) -- cycle ;
            \draw    (300,73) -- (53,73) [color={rgb, 255:red, 0; green, 0; blue, 0 }  ];
            \draw [shift={(51,73)}, rotate = 360] [color={rgb, 255:red, 0; green, 0; blue, 0 }  ][line width=0.75]    (10.93,-3.29) .. controls (6.95,-1.4) and (3.31,-0.3) .. (0,0) .. controls (3.31,0.3) and (6.95,1.4) .. (10.93,3.29)   ;
            \draw    (50,100) -- (450,100) ;
            \draw [shift={(450,100)}, rotate = 180] [color={rgb, 255:red, 0; green, 0; blue, 0 }  ][line width=0.75]    (10.93,-3.29) .. controls (6.95,-1.4) and (3.31,-0.3) .. (0,0) .. controls (3.31,0.3) and (6.95,1.4) .. (10.93,3.29)   ;
            \draw    (450,117) -- (53,117) [color={rgb, 255:red, 0; green, 0; blue, 0 }  ];
            \draw [shift={(51,117)}, rotate = 360] [color={rgb, 255:red, 0; green, 0; blue, 0 }  ][line width=0.75]    (10.93,-3.29) .. controls (6.95,-1.4) and (3.31,-0.3) .. (0,0) .. controls (3.31,0.3) and (6.95,1.4) .. (10.93,3.29)   ;
            \draw  [fill={rgb, 255:red, 255; green, 255; blue, 255 }  ,fill opacity=1 ] (22,125) -- (182,125) -- (182,190) -- (22,190) -- cycle ;
            \draw  [fill={rgb, 255:red, 255; green, 255; blue, 255 }  ,fill opacity=1 ] (300,180) -- (460,180) -- (460,245) -- (300,245) -- cycle ;
            \draw    (182,162) -- (450,162) ;
            \draw [shift={(450,162)}, rotate = 180] [color={rgb, 255:red, 0; green, 0; blue, 0 }  ][line width=0.75]    (10.93,-3.29) .. controls (6.95,-1.4) and (3.31,-0.3) .. (0,0) .. controls (3.31,0.3) and (6.95,1.4) .. (10.93,3.29)   ;
            \draw    (300,218) -- (52.67,218) [color={rgb, 255:red, 0; green, 0; blue, 0 }  ];
            \draw [shift={(50.67,218)}, rotate = 360] [color={rgb, 255:red, 0; green, 0; blue, 0 }  ][line width=0.75]    (10.93,-3.29) .. controls (6.95,-1.4) and (3.31,-0.3) .. (0,0) .. controls (3.31,0.3) and (6.95,1.4) .. (10.93,3.29)   ;
            \draw  [fill={rgb, 255:red, 255; green, 255; blue, 255 }  ,fill opacity=1 ] (22,250) -- (188,250) -- (188,280) -- (22,280) -- cycle ;
            \draw  [fill={rgb, 255:red, 255; green, 255; blue, 255 }  ,fill opacity=1 ] (300,250) -- (465,250) -- (465,280) -- (300,280) -- cycle ;

            \draw (39.2,0) node [anchor=north west][inner sep=0.75pt]  [xscale=0.8,yscale=0.8] [align=left] {$\textbf{DO}$};
            \draw (442,0) node [anchor=north west][inner sep=0.75pt]  [xscale=0.8,yscale=0.8] [align=left] {$\textbf{SP}$};
            \draw (110,40) node  [xscale=0.8,yscale=0.8] [align=left] {\begin{minipage}[lt]{153.23pt}\setlength\topsep{0pt}
                    {\small \textbf{Select}: $\displaystyle {\textstyle r_{i}\xleftarrow{\$}\mathbb{Z}_{p}}$}\\{\small \textbf{Compute}: $\displaystyle \mathcal{RC}_{i} =g^{r_{i}}$, $\displaystyle t_{i} =H_{0}(\mathcal{RC}_{i})$}\\{\small \textbf{Send}: $\displaystyle t_{i}$}\\
                \end{minipage}};
            \draw (215,25) node [anchor=north west][inner sep=0.75pt]  [font=\normalsize,xscale=0.8,yscale=0.8] [align=left] {Send $\displaystyle t_{i}$ to SP };
            \draw (385,75) node  [xscale=0.8,yscale=0.8] [align=left] {\begin{minipage}[lt]{150.73pt}\setlength\topsep{0pt}
                    {\small \textbf{Select}: $\displaystyle {\textstyle r_{j}\xleftarrow{\$}\mathbb{Z}_{p}}$}\\{\small \textbf{Compute}: $\displaystyle \mathcal{RC}_{j} =g^{r_{j}}$, $\displaystyle t_{j} =H_{0}(\mathcal{RC}_{j})$}\\{\small \textbf{Send}: $\displaystyle t_{j}$}\\
                \end{minipage}};
            \draw (215,58) node [anchor=north west][inner sep=0.75pt]  [font=\normalsize,xscale=0.8,yscale=0.8] [align=left] {\textcolor{blue}{Send $\displaystyle t_{j}$ to DO} };
            \draw (210,86) node [anchor=north west][inner sep=0.75pt]  [font=\normalsize,xscale=0.8,yscale=0.8] [align=left] {Send $\displaystyle \mathcal{RC}_{i}$ to SP };
            \draw (210,104) node [anchor=north west][inner sep=0.75pt]  [font=\normalsize,xscale=0.8,yscale=0.8] [align=left] {\textcolor{blue}{Send $\displaystyle \mathcal{RC}_{j}$ to DO} };
            \draw (389,216) node  [xscale=0.8,yscale=0.8] [align=left] {\begin{minipage}[lt]{159.8pt}\setlength\topsep{0pt}
                    {\small \textbf{Check}: $\displaystyle t_{i}\stackrel{?}{=} H_{0}(\mathcal{RC}_{i})$}\\{\small \textbf{Compute}: $\displaystyle \mathcal{RC} ={\textstyle \prod _{i=1}^{n}}\mathcal{RC}_{i}$}\\{\small $\displaystyle ch_{j} =H_{1}( \langle \mathbb{V} \rangle \parallel \mathcal{VK}_{SP} \parallel \mathcal{RC} \parallel \mathfrak{msg})$}\\{\small $\displaystyle \mathcal{MS}_{j} =\mathcal{SK}_{SP} \cdotp ch_{j} +r_{j}\bmod p$}\\{\small \textbf{Send}: $\displaystyle \mathcal{MS}_{j}$}\\
                \end{minipage}};
            \draw (210,148) node [anchor=north west][inner sep=0.75pt]  [font=\normalsize,xscale=0.8,yscale=0.8] [align=left] {Send $\displaystyle \mathcal{MS}_{i}$ to SP };
            \draw (210,203) node [anchor=north west][inner sep=0.75pt]  [font=\normalsize,xscale=0.8,yscale=0.8] [align=left] {\textcolor{blue}{Send $\displaystyle \mathcal{MS}_{j}$ to DO }};
            \draw (112,270) node  [xscale=0.8,yscale=0.8] [align=left] {\begin{minipage}[lt]{162.52pt}\setlength\topsep{0pt}
                    {\small \textbf{Compute}: $\displaystyle \mathcal{MS} ={\textstyle \sum _{i=1}^{n}}\mathcal{MS}_{i}\bmod p$ }\\{\small \textbf{Output}: $\displaystyle \sigma =(\mathcal{RC} ,\mathcal{MS})$}\\
                \end{minipage}};
            \draw (389,270) node  [xscale=0.8,yscale=0.8] [align=left] {\begin{minipage}[lt]{159.8pt}\setlength\topsep{0pt}
                    {\small \textbf{Compute}: $\displaystyle \mathcal{MS} ={\textstyle \sum _{i=1}^{n}}\mathcal{MS}_{i}\bmod p$ \ }\\{\small \textbf{Output}: $\displaystyle \sigma =(\mathcal{RC} ,\mathcal{MS})$}\\
                \end{minipage}};
            \draw (112,162) node  [xscale=0.8,yscale=0.8] [align=left] {\begin{minipage}[lt]{159.8pt}\setlength\topsep{0pt}
                    {\small \textbf{Check}: $\displaystyle t_{j}\stackrel{?}{=} H_{0}(\mathcal{RC}_{j})$}\\{\small \textbf{Compute}: $\displaystyle \mathcal{RC} ={\textstyle \prod _{i=1}^{n}}\mathcal{RC}_{i}$}\\{\small $\displaystyle ch_{i} =H_{1}( \langle \mathbb{V} \rangle \parallel \mathcal{VK}_{DO} \parallel \mathcal{RC} \parallel \mathfrak{msg})$}\\{\small $\displaystyle \mathcal{MS}_{i} =\mathcal{SK}_{DO} \cdotp ch_{i} +r_{i}\bmod p$}\\{\small \textbf{Send}: $\displaystyle \mathcal{MS}_{i}$}\\
                \end{minipage}};
        \end{tikzpicture}}
    \caption{Rounds of communication in multiSign algorithm (DO stands for Data Owner and SP stands for Service Provider).}
    \label{fig:multisig}
\end{figure*}
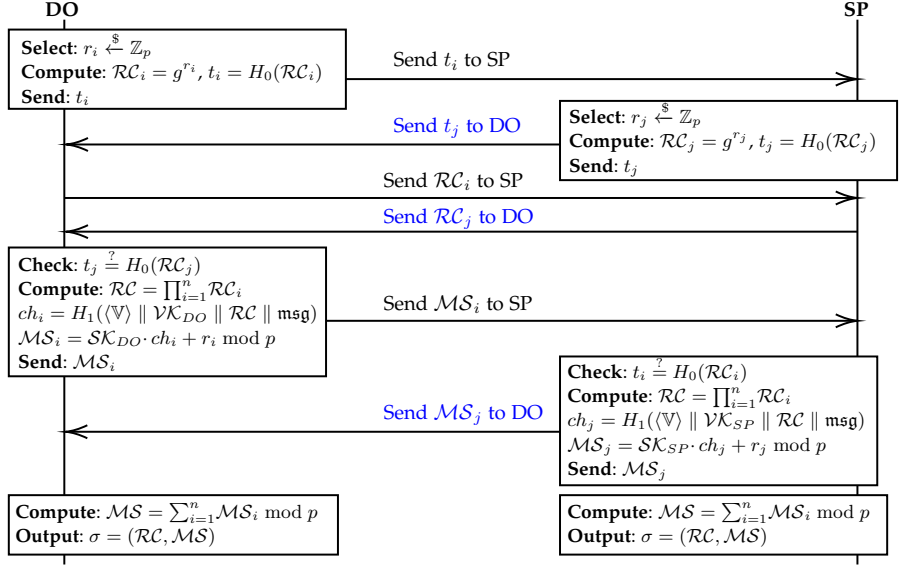
\subsection{ACCUMULATION Phase}
In order to understand what must be encrypted and what can be left open, we need to consider the ways in which data may be combined. For instance, an insecure combination is the National Insurance Number (NINO) with the medical condition since it reveals the patient's identity. However, blood pressure and symptoms can be seen as a safe combination. But it is noted that although the knowledge of a single symptom is not helpful in revealing a patient's identity (e.g., almost everyone may have a cough), detailed symptom information can be useful in inferring a patient's identity (e.g., it may be rare for a person to have a nosebleed, cough, fever and heart pain at the same time).

\begin{algorithm}
    \caption{dataPreprocessing($\Phi$)}\label{alg:cap}
    \begin{algorithmic}[1]
        \State \small \textbf{Begin:}
        \State \small \textbf{Step 1} classifies $\Phi$ by \textit{identifiable} and \textit{Non-PII}.
        \State \small \textbf{Step 2} If $tokenisable(\Phi) == true$, splits Non-PII $\Phi$ into data blocks, each block contains one main word plus the preceding stopwords if present.
        \State \small \textbf{Step 3} establishes the \textit{relationships} between an identifiable column and Non-PII columns as well as the \textit{relationships} between blocks of Non-PII columns.
        \State \small \textbf{Step 4} returns preprocessed/structured EHRs $\mathbb{M}$.
        \State \small \textbf{End.}
    \end{algorithmic}
\end{algorithm}

\begin{itemize}
    \item
          \textbf{dataPreprocessing}($\Phi$): This algorithm is run by the DO. It begins by classifying and labelling EHRs by identifiable and non personally identifiable information. As an example, identifiable columns may include the patient's NINO and mobile number. Non-identifiable columns include medical condition, gender, symptom and blood pressure. Next, it splits any tokenisable and Non-PII EHRs into blocks with the relationships between blocks linked by a 128-bit pointer (UUID). Instead of using pure numeric IDs that are easily guessed, we generate the Universally Unique Identifier using a cryptographically strong pseudo-random number generator provided in the Apache Commons IO library~\cite{uuid}. An example of a 128-bit UUID is 9458fdcc-6bed-46ec-b883-0076409e76f. This prevents simple brute-force guessing of the secret relationships because it is impossible to iterate through all random UUIDs. In the end, the preprocessed EHR blocks are output in a random access data structure, referred as \emph{electronic tenon structure}: $\mathbb{M} :=\{(\mathfrak{N}_1 ,\Phi _{1} ,\mathfrak{N}_x) ,(\mathfrak{N}_2 ,\Phi _{2} ,\mathfrak{N}_y) ,\cdots,(\mathfrak{N}_n ,\Phi _{n} ,\mathfrak{N}_z)\}$. Note that each element in $\mathbb{M}$ is a three-tuple containing (1) the UUID of the current EHR block, (2) the EHR block itself and (3) a pointer to the next EHR block.

    \item
          \textbf{encryptPointer}($\mathfrak{pp}$,$\mathbb{M}$,$\mathbb{A}$,$\{k_{l}\}_{l\in \{1,c\}}$): This algorithm is executed by the DO. It extracts the pointers $\{\mathfrak{N}_i,\mathfrak{N}_j,\mathfrak{N}_k,\cdots\}$ in $\mathbb{M}$, and encrypts them according to a patient-defined access structure $\mathbb{A}$ with different security levels $\{k_{l}\}_{l\in \{1,c\}}$, where pointers associated with different security levels require different attributes to decrypt. The ciphertext structure introduced in~\cite{Kaaniche2017} is adapted as below:
          {\small \begin{gather*}
              \mathbb{C}:=\left\{\mathbb{A}, \forall k_{l}:\{\mathbb{A}^{\prime\prime}_{i}\}_l, \mathbb{C}_{k_{l}}, \tilde{\mathbb{C}}_{k_{l}},\forall y: \mathbb{C}_{y}, \mathbb{C}_{y}^{\prime}\right\}\\
              \begin{cases}
                  \mathbb{C}_{k_{l}}=g^{\delta \varsigma_{l}}, \ \ \tilde{\mathbb{C}}_{k_{l}}=\mathfrak{N}_i \cdot e(g, g)^{\mathrm{\gamma\varsigma}_{l}} & \\
                  \mathbb{C}_{y}=g^{q_{y}(0)}, \ \ \mathbb{C}_{y}^{\prime}=H(\operatorname{att}(y))^{q_{y}(0)}                                            & \\
              \end{cases}
          \end{gather*}}
          where $g^{\delta}$ and $e(g,g)^{\gamma}$ are extracted from the public parameters $\mathfrak{pp}$ generated during the SETUP phase by the CTA. Moreover, we note that the advantage of CP-ABE is that the enciphering secret is built into the relevant ciphertext, rather than being placed in the private key (key management is minimised)~\cite{Bethencourt2007}. Here, the enciphering secret $\varsigma_{l}$ embedded in each ciphertext with a particular security level $k_l$ is computed as $\varsigma_{l}:=\sum_{i \in\left\{1,2, \cdots, n_{l}\right\}} q_{r}\left(\operatorname{index}\left(x_{i}\right)\right)$ where $q_{r}(x)$ is the polynomial related to the root node $r$ of $\mathbb{A}$, $q_{r}(x)=a_{0}+a_{1} x+\cdots+a_{d_{r}} x^{d_{r}}$~\cite{Kaaniche2017}.

    \item
          \textbf{multiSign}($\mathcal{SK}_i$,$\mathbb{V}$,$\mathfrak{msg}$): This algorithm requires several rounds of communication between sigining parties (e.g., DO and SPs). A compact multi-signature $\sigma$ is generated if all participants are honest, which means that the multiSign algorithm terminates immediately whenever one signer is dishonest. It takes as inputs a message $\mathfrak{msg}$, the current signer's signing key $\mathcal{SK}_i$, and a set of verification keys $\mathbb{V}:=\{\mathcal{VK}_1,\mathcal{VK}_2,\cdots,\mathcal{VK}_n\}$ of all participants. The multi-signature $\sigma:=(\mathcal{RC} \leftarrow \prod_{i=1}^{n} \mathcal{RC}_{i}, \mathcal{MS} \leftarrow \sum_{i=1}^{n} \mathcal{MS}_{i} \bmod p)$ is produced as a two-tuple containing the aggregated partial signatures $\mathcal{MS}$ and the nonce commitment $\mathcal{RC}$. It is generated based on the signing algorithm presented in Bellare and Neven's Multisig scheme~\cite{Bellare2006}, and the adapted version is shown in Fig. \hyperref[fig:multisig]{5}.
          In our system, there are two forms of data that need to be multi-signed: the EHR blocks per se and the ciphertext containing the secret relationships between them. Hence, we define $\sigma _{\Phi _{i}}$ as $\sigma _{\Phi _{i}}\leftarrow multiSign(\mathcal{SK}_{i} ,\mathbb{V} ,\mathfrak{msg}=H( \Phi _{i} \| \mathfrak{N}_{i} \| \mathfrak{pp} \| t))$ to represent the multi-signature for a given EHR block, and we define $\sigma _{E_{i}}$ as $\sigma _{E_{i}}\leftarrow multiSign(\mathcal{SK}_{i} ,\mathbb{V} ,\mathfrak{msg}=H( E_{i} \| \mathfrak{pp} \| t))$ to represent the multi-signature of the secret relationships. These ensure that DOs and SPs cannot refute their responsibility for the EHRs provided and allow TDB and DUs to verify the integrity and authenticity of the EHRs when necessary.

    \item
          \textbf{verify}($\sigma$,$\mathbb{V}$,$\mathfrak{msg}$): This deterministic algorithm is the last key algorithm in the ACCUMULATION phase. It can be executed by the TDB and DUs to verify the multi-signature $\sigma$. It starts by gathering the challenge numbers: $ch_{i} \leftarrow H_{1}\left(\langle\mathbb{V}\rangle \| \mathcal{VK}_{i}\| \mathcal{RC} \| \mathfrak{msg}\right)$ for $\forall i\in \{1,2,\cdots,n\}$ as in the third round of the signing process via an ideal cryptographic hash function $H_1:\{0,1\}^* \rightarrow \{0,1\}^{m\in\mathbb{N}}$. These challenge numbers are then applied to the final validation expression: $g^\mathcal{MS} \stackrel{?}{=} \mathcal{RC} \ {\prod_{i=1}^{n}} \mathcal{VK}_{i}^{ch_{i}}$. According to MS-BN, the verification fails ($\bot \leftarrow$ verify$(\sigma,\mathbb{V},\mathfrak{msg})$) if the above equation does not hold. The whole ACCUMULATION phase will also fail, and the data cannot be stored at this point. Therefore, legitimate EHRs can only be saved to the TDB if all the accompanying signatures $\sigma$ are validated by the TDB.

\end{itemize}

\subsection{RETRIEVAL Phase}
Once the DU confirms that the accompanying multi-signature is not a forgery, he/she can call the following algorithm to decrypt the ciphertext hierarchically. Please note that the higher the access rights represented by the DU's attributes, the larger the number of pointers that can be revealed.

\begin{itemize}
    \item
          \textbf{decryptPointer}($\mathfrak{pp}$,$\mathbb{C}$,$\mathcal{DK}_i$): It takes as input the public parameters $\mathfrak{pp}$, ciphertext $\mathbb{C}$, and the current decrypting entity's decryption key $\mathcal{DK}_i$. The inner ciphertext $\mathbb{C}_l$ can be decrypted, and the secret pointer $\mathfrak{N}_l$ can be retrieved if the DU's attributes embedded in $\mathcal{DK}_i$ satisfy the patient-defined access structure $\mathbb{A}$, with respect to the connected sub-trees $\{\mathbb{A}^{\prime\prime}_{i}\}_l$ and the security level $k_l$. More concretely, each security level needs to be evaluated separately for obtaining different $\mathfrak{N}_i$. Following the authors of ML-ABE, their algorithm starts decrypting from the outer level using the decryption algorithm developed in the classic CP-ABE proposed by Bethencourt, Sahai and Waters. For the internal level, the DU would be able to extract the enciphering secret $e(g, g)^{\mathfrak{r} \varsigma_{l}}$ from $n_l$ identified sub-trees $\{\mathbb{A}^{\prime\prime}_{i}\}_l$ rooted at the root node if the \text{$k_l$-security} level is satisfied~\cite{Kaaniche2017}:
          \begingroup
          \allowdisplaybreaks
          {\begin{align*}
                  F_{R_{k_{l}}} & = \prod_{x \in \{\mathbb{A}^{\prime\prime}_{i}\}_l} e(g, g)^{\mathfrak{r} q_{parent(x)}(index(x))}                                         \\
                                & = e(g, g)^{\Sigma_{x \in \{\mathbb{A}^{\prime\prime}_{i}\}_l} \mathfrak{r} q_{parent(x)}(index(x))} = e(g, g)^{\mathfrak{r} \varsigma_{l}}
              \end{align*} }
          \endgroup
          , where the function \textit{parent(x)} is called to find the parent node of node \textit{x} in $\mathbb{A}$. The index related to node $x$ is located by calling the function \textit{index(x)}. The secret $e(g, g)^{\mathfrak{r} \varsigma_{l}}$ can be used to derive a pointer $\mathfrak{N}_i$ that has been flagged with the specified security level.
          Having the secret key of the corresponding pointer extracted by a legitimate DU through the above steps, the pointer $\mathfrak{N}_i$ used to locate the corresponding EHR block can be obtained in its plaintext form by:
\end{itemize}
\begingroup
\allowdisplaybreaks
{\begin{align*}
        \frac{\tilde{\mathbb{C}}_{k_{l}}}{e\left(\mathbb{C}_{k_{l}}, \mathcal{D}\right) / F_{R_{k_{l}}}}
         & =\frac{\mathfrak{N}_i \cdot e(g, g)^{\mathrm{\gamma\varsigma}_{l}}}{e\left(g^{\delta \varsigma_{l}}, g^{(\gamma+\mathfrak{r}) / \delta}\right) / e(g, g)^{\mathfrak{r} \varsigma_{l}}} \\
         & =\frac{\mathfrak{N}_i \cdot e(g, g)^{\gamma\varsigma_{l}}}{e(g, g)^{\gamma\varsigma_{l}}}=\mathfrak{N}_i
    \end{align*}}
\endgroup

\begin{figure*}[ht]
    \centering
    \includegraphics[width=0.9\textwidth]{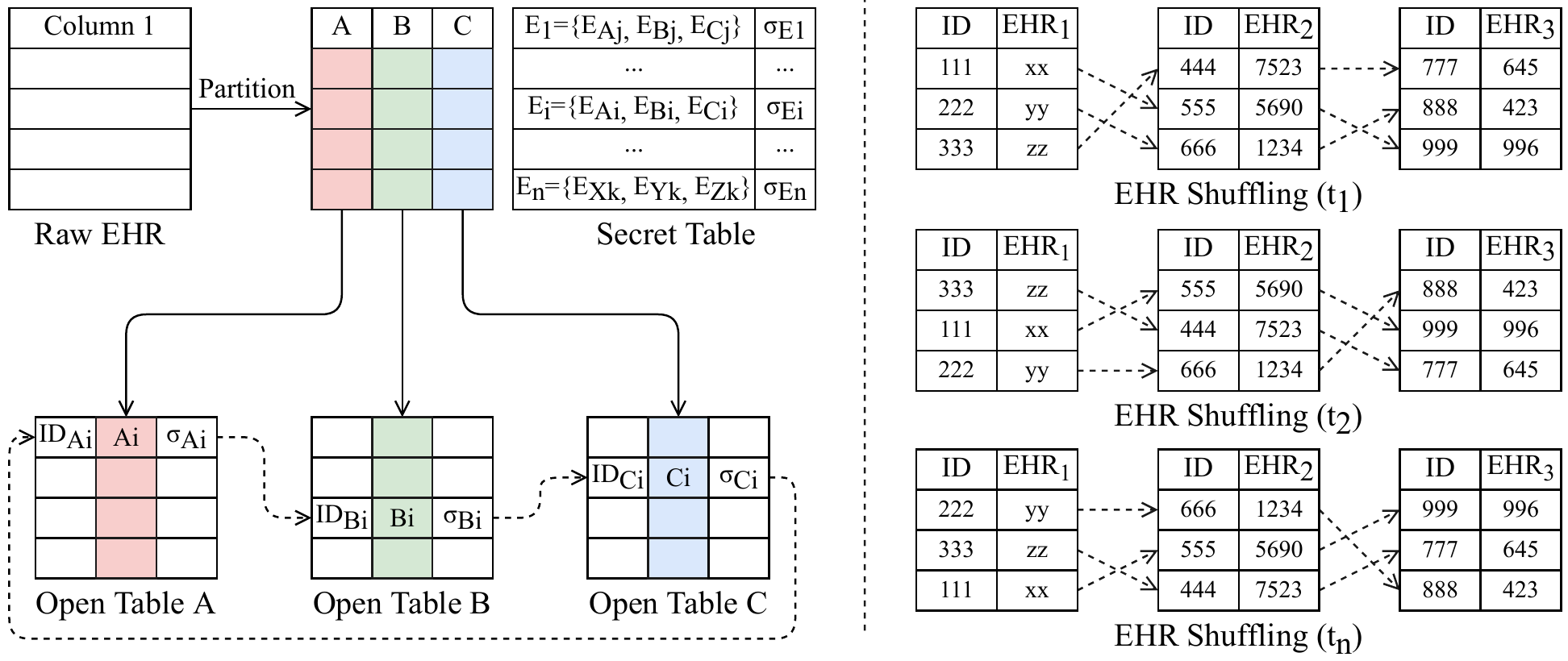}
    \caption{Example of working principle of the Tenon database.}\label{fig:tdb}
\end{figure*}
\subsection{Working Principle of TDB}\label{shuffling}
In this subsection, we explain the working principle of the TDB that forms one of the key components in the proposed E-Tenon system. As seen visually in the left part of Fig. \hyperref[fig:tdb]{6}, the TDB is composed of several open tables and one secret table. There are three columns per row in the open table: pointer, EHR block and multi-signature. It is worth noting that all encrypted data are separated from the open table. This is because we have adopted a multi-level ABE that produces a ciphertext containing multiple encrypted pointers. To reconstruct the data in the open tables, the authorised DU first decrypts the outer layer of the ciphertext. If successful, they will be presented with a series of encrypted pointers, and the number of pointers that can be decrypted depends on the DU's attributes. In this context, each row in the open table should not contain any encrypted pointers because this compromises the data confidentiality once a low privileged DU decrypts the outer ciphertext. Namely, an adversary can effortlessly use the encrypted pointers to locate the rows containing these pointers in the misconfigured open tables and directly combine them without the need to decrypt the secret pointers according to his/her attributes. Therefore, we collectively store all secret pointers accompanied by their multi-signature in a protected table isolated from other public tables. A legitimate DU can only read the entries that he/she is granted access to read. Moreover, the malicious outsider will not be able to see all the encrypted pointers and the malicious insider who can decrypt the outer layer of ciphertext will not be able to exploit the internal encrypted pointers to infer any information in the TDB.

Besides, we propose a complementary shuffling mechanism to further reduce the risk of any entity learning any information from the open data stored in the TDB. As demonstrated in the right part of Fig. \hyperref[fig:tdb]{6}, the TDB constantly shuffles the data to ensure that the order of the data is different each time the user accesses the TDB. Nevertheless, there is a possibility that the order of the data remains unchanged after the shuffle. If such a corner case occurs, the TDB will be automatically re-shuffled. This can be achieved by running a deterministic algorithm that compares the hash of the current data order with the hash of the previous data order. The algorithm returns $\bot$ when the shuffled data order is accidentally the same as the original data order. Thus, the TDB needs to reshuffle the data to avoid this problem. These will further enhance the security of TDB and leave attackers with no rules to follow.

\subsection{Signing Process}
We use multi-signature to place constraints between the SP and the DO. This allows the DO to confirm that the EHR obtained from the SP is valid. On the other side, the SP can ensure that the DO has not attempted to alter the original EHRs they provided. It is therefore possible to guarantee the integrity and authenticity of the EHR if they have reached an agreement to sign together on the \textit{same} message.

The following describes two issues we need to address when signing. Firstly, imagine a signature that is obtained by encrypting the hash of a message generated via a one-way hash function. This signature is said to be valid if the hash value generated by the verifier using the same hash function on the accompanying message is equivalent to the hash obtained by decrypting the signature provided by the signer. Such a signing and verification process establishes the integrity of the message but does not maintain its confidentiality since the message used to generate the hash is in its original form~\cite{An2002}. The second issue is how the SP and DO sign the same content when there are inconsistencies between the data held by the SP and DO after preprocessing the EHRs. To address these issues, we propose the following steps for signers to securely multi-sign the same content. A visualisation of the process is provided (see Fig. \hyperref[fig:signing]{7}).
\begin{itemize}
    \item \textbf{Step 1}: DO calls dataPreprocessing($\Phi$) to preprocess EHRs and encryptPointer($\mathfrak{pp}$,$\mathbb{M}$,$\mathbb{A}$,$\{k_{l}\}$) to encrypt the pointers with self-defined access policies.
    \item \textbf{Step 2}: DO sends the preprocessed EHRs with encrypted pointers to SP.
    \item \textbf{Step 3}: SP decrypts all encrypted pointers using decryptPointer($\mathfrak{pp}$,$\mathbb{C}$,$\mathcal{DK}_i$) and reconstructs the data by joining EHR blocks in the right order. When DO allows legitimate SPs with authorised attributes to decrypt all secrets, there should be no concern since the original data comes from the SP.
    \item \textbf{Step 4}: SP compares the reconstructed data with the original data maintained by itself. If they are identical, then the SP and DO have reached an agreement that the preprocessed EHRs have not been tampered with by the DO.
    \item \textbf{Step 5}: SP and DO interactively sign, using the algorithm multiSign($\mathcal{SK}_i$,$\mathbb{V}$,$\mathfrak{msg}$), on the hash of the confirmed EHR data obtained in step 2.
\end{itemize}

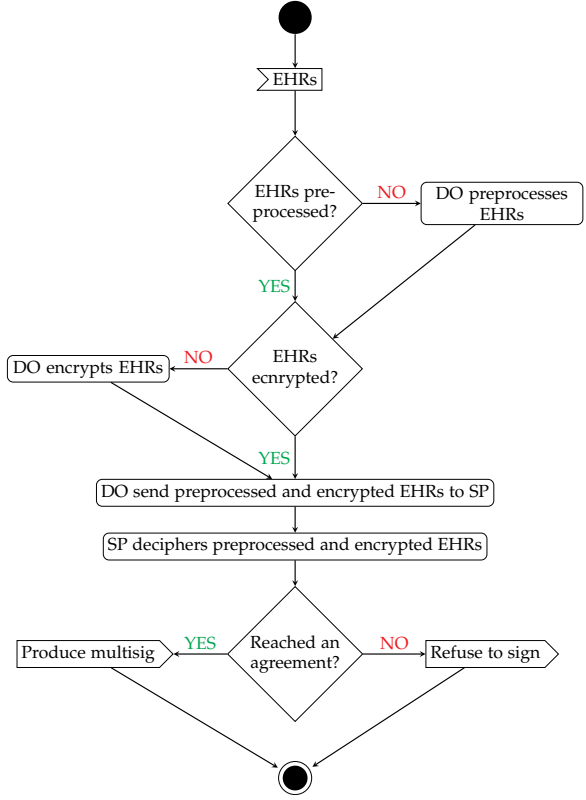
\begin{figure}[t]
    \centering
    \scalebox{0.55}{
        \begin{tikzpicture}
            [node distance=1.5cm,
                start/.style={circle, fill=black, minimum size=8mm},
                end/.style={path picture={\draw circle [radius=4mm]; \fill circle [radius=3mm];}},
                activity/.style={rectangle, draw, text centered, rounded corners},
                decision/.style={diamond, draw, text width=5em, text badly centered, inner sep=0pt,
                        node distance=2cm},
                input/.style={signal, draw, signal from=west, signal to=nowhere},
                output/.style={signal, draw, signal to=east},
                arrow/.style={thick, ->, >=stealth}]
            \node [start] (start) {};
            \node [font=\large, input, below of=start] (load_ehr) {EHRs};
            \node [font=\large, decision, below of=load_ehr, text width=7em, node distance=3cm] (preprocessed) {EHRs preprocessed?};
            \node [font=\large, decision, below of=preprocessed, text width=7em, node distance=4cm] (encrypted) {EHRs ecnrypted?};
            \node [font=\large, activity, right of=preprocessed, text width=11em, node distance=5cm] (preprocess_ehr) {DO preprocesses EHRs};
            \node [font=\large, activity, left of=encrypted, text width=11em, node distance=5cm] (encrypt_ehr) {DO encrypts EHRs};
            \node [font=\large, activity, below of=encrypted, node distance=3cm] (send_ehr) {DO send preprocessed and encrypted EHRs to SP};
            \node [font=\large, activity, below of=send_ehr, node distance=1.3cm] (decipher_ehr) {SP deciphers preprocessed and encrypted EHRs};
            \node [font=\large, decision, below of=decipher_ehr, text width=7em, node distance=2.6cm] (identical) {Reached an agreement?};
            \node [font=\large, output, right of=identical, node distance=4.6cm] (refuse) {Refuse to sign};
            \node [font=\large, output, left of=identical, node distance=5cm] (sign) {Produce multisig};
            \node [end, minimum size=8.2mm, below of=identical, node distance=3cm] (end) {};

            \draw [arrow] (start) -- (load_ehr);
            \draw [arrow] (load_ehr) -- (preprocessed);
            \draw [arrow] (preprocessed) -- node [anchor=east] {\large\textcolor{Green}{YES}} (encrypted);
            \draw [arrow] (preprocessed) -- node [anchor=south] {\large\textcolor{Red}{NO}} (preprocess_ehr);
            \draw [arrow] (preprocess_ehr) -- (encrypted);
            \draw [arrow] (encrypted) -- node [anchor=east] {\large\textcolor{Green}{YES}} (send_ehr);
            \draw [arrow] (encrypted) -- node [anchor=south] {\large\textcolor{Red}{NO}} (encrypt_ehr);
            \draw [arrow] (encrypt_ehr) -- (send_ehr);
            \draw [arrow] (send_ehr) -- (decipher_ehr);
            \draw [arrow] (decipher_ehr) -- (identical);
            \draw [arrow] (identical) -- node [anchor=south] {\large\textcolor{Green}{YES}} (sign);
            \draw [arrow] (identical) -- node [anchor=south] {\large\textcolor{Red}{NO}} (refuse);
            \draw [arrow] (sign) -- (end);
            \draw [arrow] (refuse) -- (end);
        \end{tikzpicture}}
    \caption{An illustration of how data owners and service providers can regulate each other to ensure the accuracy and integrity of the EHR.}
    \label{fig:signing}
\end{figure}

\section{Security Analysis}\label{section:5}
In this section, we analyse and prove the security of our proposed scheme formally against the adversarial model described in Section \hyperref[Adversarial]{3}. To ensure that E-Tenon is secure and resilient to a range of possible attacks, ML-ABE (a variant of CP-ABE) and MS-BN (a variant of Schnorr signature) are selected and integrated for reliability and validity. First, we note that ML-ABE is a proven CCA-1 secure scheme, where CCA-1 refers to the non-adaptive chosen-ciphertext attacks. Second, MS-BN is a proven secure scheme against the multi-user unforgeability against chosen message attacks (MU-UF-CMA). Our E-Tenon scheme should naturally inherit the security properties of these two building blocks. 

\begin{theorem}
Assume that the ML-ABE scheme in~\cite{Kaaniche2017} is selectively CCA-1 secure. Then, the E-Tenon system preserves confidentiality and is selectively CCA-1 secure with respect to the CCA-1 security game and Definition \hyperref[cca-1]{7}.
\end{theorem}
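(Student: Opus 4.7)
The plan is to prove the statement by a direct black-box reduction from the CCA-1 security of ML-ABE. Concretely, I would show that any PPT adversary $\mathcal{A}$ that wins the E-Tenon security game of Definition \ref{cca-1} with advantage $\epsilon$ can be turned into a PPT simulator $\mathcal{B}$ that wins the underlying ML-ABE CCA-1 game with the same advantage $\epsilon$. Because \cite{Kaaniche2017} establishes that ML-ABE is selectively CCA-1 secure, $\epsilon$ must be negligible, which gives the claimed bound $\tfrac{1}{2}\pm\epsilon$.

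The first substantive step is to observe that the only ciphertext material produced by E-Tenon comes from \textbf{encryptPointer}, which is a syntactic invocation of ML-ABE on the UUID pointers $\mathfrak{N}_i$. Everything else exposed by the system to $\mathcal{A}$ is either open (the tokenised EHR blocks $\Phi_i$ and the multi-signatures $\sigma_{\Phi_i},\sigma_{E_i}$) or completely determined by the public parameters and $\mathcal{A}$'s own queries. Invoking the security assumption that each Non-PII category contains at least $\kappa$ indistinguishable types, together with the shuffling mechanism of Section \ref{shuffling}, the marginal distribution over the open part of the TDB is independent of which pointer corresponds to which block. Hence the confidentiality game reduces without loss to distinguishing two ML-ABE ciphertexts built on the pointer plaintexts $\mathbb{M}_0,\mathbb{M}_1$ chosen by $\mathcal{A}$.

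The second step is the reduction itself. $\mathcal{B}$ receives $\mathfrak{pp}$ from its ML-ABE challenger and relays it to $\mathcal{A}$; every \textbf{Create}, \textbf{Corrupt} and (pre-challenge) \textbf{Decrypt} query issued by $\mathcal{A}$ is rewritten one-to-one into the corresponding query of $\mathcal{B}$'s own ML-ABE game, with the same bookkeeping table $T$, counter $j$ and corrupted-attribute set $\mathbb{Q}$. Since the key material $\mathcal{SDK},\mathcal{VEK}$ output by \textbf{keyGeneration} has exactly the same distribution as keys in the ML-ABE game, and since the CCA-1 restriction forbids decryption queries after the challenge in both games, the simulation of $\mathcal{A}$'s view is perfect. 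When $\mathcal{A}$ submits $(\mathbb{M}_0,\mathbb{M}_1,\mathbb{A}^*,\{k_l\})$ with the constraint that no $\mathbb{S}\in\mathbb{Q}$ satisfies $\mathbb{A}^*$, $\mathcal{B}$ forwards the tuple unchanged, receives the challenge ciphertext $\mathbb{C}^*$, and hands it to $\mathcal{A}$. Finally $\mathcal{B}$ outputs $\mathcal{A}$'s guess $b'$ as its own, yielding $\mathrm{Adv}^{CCA\text{-}1}_{\mathcal{B}}=\mathrm{Adv}^{CCA\text{-}1}_{\mathcal{A}}$.

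The main obstacle I expect lies not in the algebraic reduction — that is essentially a syntactic wrapper around the ML-ABE proof of \cite{Kaaniche2017} — but in rigorously discharging the independence claim used in step one. The shuffling hash check in Section \ref{shuffling} only rules out repeating the previous order and does not by itself guarantee uniformity over the symmetric group, so a clean proof would probably insert a hybrid in which the real shuffle is replaced by an ideal uniform permutation; this hybrid is statistically close under the $\kappa$-type assumption but requires an explicit argument. Similarly, one has to argue that the multi-signatures $\sigma_{\Phi_i},\sigma_{E_i}$ are simulatable without the signing keys of honest signers, which follows from the random-oracle simulator already built into MS-BN but deserves to be stated as a separate lemma before being composed with the ML-ABE reduction.
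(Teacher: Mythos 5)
Your proposal is correct in substance and shares the paper's core skeleton: a simulator $\mathcal{B}$ sitting between the E-Tenon adversary $\mathcal{A}$ and an ML-ABE challenger, relaying the Create/Corrupt/Decrypt queries and the challenge, and inheriting security from the underlying scheme. Where you diverge is in how the final bound is obtained and in what you consider part of the adversary's view. You keep ML-ABE entirely black-box and conclude with an advantage-preserving reduction ($\mathrm{Adv}_{\mathcal{B}} = \mathrm{Adv}_{\mathcal{A}}$, hence negligible by assumption); the paper instead partially re-opens the ML-ABE proof, inserting a hybrid in which $\tilde{\mathbb{C}}_{k_l}$ is replaced by $\mathbb{M}_1 \cdot e(g,g)^{\theta}$ for random $\theta$ (losing a factor of two in advantage) and then invoking the generic-group-model bound $\mathcal{O}(c^{*} q^{2}/p)$ from case 1 of the ML-ABE analysis. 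The paper's challenger also substitutes its own access structure $\mathbb{A}_{\mathcal{B}} \subseteq \mathbb{A}^{*}$ rather than forwarding $\mathbb{A}^{*}$ unchanged. Your black-box version is cleaner and more modular; the paper's version ties the concrete security loss to explicit parameters but at the cost of duplicating the cited proof. The more significant difference is your first step: you correctly observe that the E-Tenon game exposes more than the ML-ABE ciphertext --- the open EHR blocks, the multi-signatures $\sigma_{\Phi_i}, \sigma_{E_i}$, and the shuffled TDB layout --- and that a complete reduction must argue these are simulatable or information-theoretically independent of the challenge bit (via a hybrid replacing the real shuffle with a uniform permutation under the $\kappa$-type assumption, and a signature-simulation lemma from the MS-BN random-oracle simulator). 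The paper omits this entirely and treats the theorem as if E-Tenon's ciphertext were the only channel, so your flagged ``obstacles'' are not weaknesses of your plan relative to the paper but genuine gaps in the paper's own argument that your outline would need to (and sketches how to) close.
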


\begin{proof}
To prove the security of the E-Tenon system with respect to Definition \hyperref[cca-1]{7}, we consider there exist two polynomial-time adversaries $\mathcal{A}$ and $\mathcal{B}$, and a challenger $\mathcal{C}$. Here $\mathcal{B}$ is a simulator algorithm to run the security game defined in the naive CP-ABE. The security game $\mathcal{G}^{CCA-1}_{\mathcal{A}}(\lambda)$ is simulated as a non-adaptive chosen ciphertext attack against the proposed model by the adversary $\mathcal{A}$. It proceeds with $\mathcal{A}$, $\mathcal{B}, \text{and}\ \mathcal{C}$ in four phases as follows:
\begin{itemize}
    \item \textbf{Setup:} $\mathcal{C}$ runs setup algorithm with the security parameter $\lambda$ to obtain the public parameters and the master key $\mathfrak{msk},\mathfrak{pp}\leftarrow \text{setup}(\lambda)$, where $\mathfrak{msk}$ is defined as $(\delta,g^{\gamma})$ and $\mathfrak{pp}$ is defined as $\{\mathbb{G}_0,\mathbb{G}_1,p,g,g^{\delta},e,e(g,g)^{\gamma}\}$. Upon generation, $\mathcal{C}$ sends $\mathfrak{pp}$ to $\mathcal{B}$. Then $\mathcal{B}$ forward the same $\mathfrak{pp}$ to $\mathcal{A}$.
    \item \textbf{Query:} $\mathcal{B}$ initialises an empty table $T$, an integer session counter $j$ starting from zero and an empty set $\mathbb{Q}$. $\mathcal{A}$ can repeatedly query the following during this phase:
        \begin{itemize}
            \item \textbf{Create:} $\mathcal{B}$ asks $\mathcal{C}$ to increment $j$ by 1. $\mathcal{B}$ asks $\mathcal{C}$ to run the setup algorithm $\mathfrak{msk},\mathfrak{pp}\leftarrow \text{setup}(\lambda)$ and the keyGeneration algorithm $keys[\mathcal{DK}]\leftarrow \text{keyGeneration}(\mathfrak{pp},\mathfrak{msk},\mathbb{S})$ to extract a decryption key $\mathcal{DK}$ on $\mathbb{S}$ and the corresponding security levels $k_l$. Upon receiving $\mathcal{DK}$ from $\mathcal{C}$, $\mathcal{B}$ stores the entry $(j, \mathbb{S}, \mathfrak{pp}, \mathfrak{msk}, \mathcal{DK})$ in $T$ if it is not a duplicate entry and shares the decryption key $\mathcal{DK}$ with $\mathcal{A}$.
            \item \textbf{Corrupt:} $\mathcal{A}$ requests the decryption output of a ciphertext $\mathbb{C}$ using $\mathcal{DK}$ on $\mathbb{S}$. $\mathcal{B}$ checks if there is a previously extracted $\mathcal{DK}$ for $\mathbb{S}$ in the table $T$. If yes, $\mathcal{B}$ sets $\mathbb{Q} = \mathbb{Q}\cup\mathbb{S}$ and proceeds. Otherwise, $\mathcal{B}$ asks $\mathcal{C}$ to run the Create phase again and extract the corresponding $\mathcal{DK}$, such that the challenge access structure $\mathbb{A^*}(j,\mathbb{S},k_l)$ is equal to 1. 
            \item \textbf{Decrypt:} Upon receiving $\mathcal{DK}$, $\mathcal{B}$ decrypts the ciphertext $\mathbb{C}$ with $\mathcal{DK}$ using the decryption algorithm presented in the naive CP-ABE scheme. Finally, $\mathcal{B}$ returns the decryption output of the ciphertext $\mathbb{C}$ to $\mathcal{A}$.
        \end{itemize}
    \item \textbf{Challenge:} $\mathcal{A}$ chooses two plaintext messages $\mathbb{M}_0$ and $\mathbb{M}_1$ of the same length to be encrypted, which must remain unqueried until then. $\mathcal{A}$ also submits a challenge access structure $\mathbb{A}^*$ such that $\mathbb{S}$ does not satisfy $\mathbb{A}^*$ for all $\mathbb{S}\in\mathbb{Q}$. Upon receiving $\mathbb{A}^*$, $\mathcal{B}$ creates its own access structure $\mathbb{A}_\mathcal{B}$ based on the challenge access structure submitted by $\mathcal{A}$, such that $\mathbb{A}_\mathcal{B}\subseteq\mathbb{A}^*$. Next, $\mathcal{B}$ asks $\mathcal{C}$ to generate the ciphertext based on $\mathbb{M}_0$, $\mathbb{M}_1$ and $\mathbb{A}_\mathcal{B}$. $\mathbb{C}$ then randomly selects a bit $b\in\{0,1\}$ and outputs the encryption results of $\mathbb{M}_b$ under $\mathbb{A}_\mathcal{B}$ to $\mathcal{B}$. Finally, $\mathcal{B}$ forward the output to $\mathcal{A}$.
    \item \textbf{Guess:} $\mathcal{A}$ outputs its guess $b^\prime\in\{0,1\}$ for $b$. $\mathcal{A}$ wins the game if $b^\prime=b$.
\end{itemize}

In order to determine the adversary's advantage at this stage, some basic observations are necessary to be made. It is noted that the element $\tilde{\mathbb{C}}_{k_{l}}$ within the ciphertext encrypted by $\mathcal{C}$ during the challenge phase is either $\mathbb{M}_0 \cdot e(g, g)^{\mathrm{\gamma\varsigma}_{l}}$ or $\mathbb{M}_0 \cdot e(g, g)^{\mathrm{\gamma\varsigma}_{l}}$. Thus, the advantage for the adversary to distinguish between the two cases is $\mathrm{Adv}^{CCA-1}_{\mathcal{A}}(1^\lambda)\leq \epsilon$.
Now, let us take into account a modified game ${\mathcal{G}^{CCA-1}_{\mathcal{A}}}^\prime$. In this game, the main difference is that the element $\tilde{\mathbb{C}}_{k_{l}}$ of the challenge ciphertext becomes either $\mathbb{M}_0 \cdot e(g, g)^{\mathrm{\gamma\varsigma}_{l}}$ or $\mathbb{M}_1 \cdot e(g, g)^\theta$, where $\theta$ is chosen at random out of an additive group, $\theta${\scriptsize$\ \xleftarrow{\$}\ $}$\mathbb{O}_p$. Accordingly, the advantage of the adversary in winning the modified game becomes ${\mathrm{Adv}^{CCA-1}_{\mathcal{A}}}^\prime(1^\lambda)\geq \frac{1}{2} \cdot \epsilon$.
Then we simulate the attack over the modified security game based on case 1 of~\cite{Kaaniche2017}. A challenger $\mathcal{C}$ first chooses two exponents $\gamma$ and $\delta$ at random from $\mathbb{Z}_p$, such that $\gamma,\delta${\scriptsize$\ \xleftarrow{\$}\ $}$\mathbb{Z}_{p}$. $\mathcal{C}$ then obtains and shares the public parameters with the adversary in a special encoding: $\mathfrak{E}_{0}(1)=g$, $\mathfrak{E}_{0}(\delta)=g^{\delta}$ and $\mathfrak{E}_{T}(\gamma)$. In the subsequent challenge phase, the adversary $\mathcal{A}$ again asks challenger $\mathcal{C}$ to encrypt the challenge message under the access structure ${\mathbb{A}^{\prime}}^*$. After that, the adversary $\mathcal{A}$ gets $\mathbb{C}_{k_{l}} = g^{\delta \varsigma_{l}}$ and $\tilde{\mathbb{C}}_{k_{l}} = e(g^{\delta}, g^{\delta})^{\theta_l}$ for each defined security level along with the relevant attributes. It is worth pointing out that the request from adversary $\mathcal{A}$ will not be granted if $\mathcal{A}$ requests a set of attributes that can satisfy all the security levels defined in the challenge access structure. In other contradictory cases, the game terminates immediately, and the adversary loses the game. Finally, we use the big-$\mathcal{O}$ notation to express the upper limit of the adversary's advantage in winning the aforementioned security game as ${\mathrm{Adv}^{CCA-1}_{\mathcal{A}}}^\prime(1^\lambda)\leq \mathcal{O}(\frac{c^*\cdot q^2}{p})$, where $c^*$ is the bound on the maximum number of security level can be set, $q$ is the bound on the maximum number of group elements obtained by $\mathcal{A}$, and $p$ is the order of an additive group $\mathbb{O}_p$. Hence, we state that the proposed E-Tenon system is CCA-1 secure and the confidentiality of EHR is guaranteed under the Generic Group Model if no PPT adversary can selectively break the security naive CP-ABE and ML-ABE with non-negligible advantage.
\end{proof}

\begin{theorem}
Assume that the ML-ABE scheme in~\cite{Kaaniche2017} is private against both malicious and honest-but-curious adversaries. Then, the proposed E-Tenon system preserves privacy against both malicious DU and honest-but-curious TDB.
\end{theorem}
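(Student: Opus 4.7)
The plan is to establish this theorem through a reduction argument, showing that any adversary breaking E-Tenon's privacy can be converted into an adversary breaking the assumed privacy of ML-ABE. I would first formalise a privacy game in which the challenger produces two candidate patient-to-EHR mappings and the adversary (playing the role of either a malicious DU or an honest-but-curious TDB) must distinguish which one was used. Under the theorem's premise, my task reduces to demonstrating that the only components of E-Tenon that could plausibly leak identity-linkage information are already covered by the ML-ABE privacy guarantee (for the encrypted pointers and identifiable fields) or by the security assumptions stated in Section 3 (for the open Non-PII blocks).

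For the malicious DU case, I would observe that a DU's view consists of the public parameters $\mathfrak{pp}$, the open EHR blocks $\Phi_i$, the ciphertexts $\mathbb{C}$ produced by \textbf{encryptPointer}, the multi-signatures $\sigma$, and whatever its own decryption key $\mathcal{DK}$ permits it to recover. Since the DU cannot combine blocks without the corresponding pointers $\mathfrak{N}_i$, and since each pointer is generated by a cryptographically strong UUID generator, any successful linkage of a patient identifier to a medical record must come from decrypting a pointer not authorised by the DU's attribute set $\mathbb{S}$ under the access structure $\mathbb{A}$ with security levels $\{k_l\}$. I would then construct a wrapper $\mathcal{B}$ that forwards the DU adversary's queries to an ML-ABE privacy challenger, simulating everything else from $\mathfrak{pp}$, and conclude that a non-negligible advantage for the DU translates into a non-negligible advantage against ML-ABE privacy, contradicting the assumption.

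For the honest-but-curious TDB case, the observation is stronger: TDB never holds $\mathfrak{msk}$ nor any $\mathcal{DK}$, so its view is a strict subset of the DU's view above minus the decryption oracle. I would simulate TDB's entire transcript from $\mathfrak{pp}$ alone, with the shuffling mechanism of Section 4.6 providing the extra guarantee that row order is re-randomised on every access; this makes positional inferences information-theoretically independent of the underlying data, so TDB's distinguishing advantage collapses onto the ML-ABE privacy distinguishing advantage. The $\kappa$-indistinguishability assumption (each Non-PII category contains at least $\kappa$ types) then ensures that the publicly visible tokenised blocks likewise leak no identifying information.

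The main obstacle, I expect, will be cleanly handling the auxiliary information available in the open database --- specifically, showing that the public EHR blocks $\Phi_i$ together with their multi-signatures $\sigma_{\Phi_i}$ do not provide leverage beyond what ML-ABE privacy already accounts for. I would address this with a hybrid argument: first replace the real public tables with simulated ones drawn from the same category distribution (justified by the $\kappa$-indistinguishability assumption), and then invoke ML-ABE privacy on the remaining encrypted pointers. A subtle point within the hybrid is that $\sigma_{\Phi_i}$ is computed over $H(\Phi_i \,\|\, \mathfrak{N}_i \,\|\, \mathfrak{pp} \,\|\, t)$, so it commits to the pointer $\mathfrak{N}_i$; I would argue that because the hash is one-way and $\mathfrak{N}_i$ is a high-entropy UUID, the signature reveals nothing beyond membership of already-public inputs, preserving the reduction.
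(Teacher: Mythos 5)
Your proposal is sound but follows a noticeably different route from the paper's own argument, and in places it is more careful. For the malicious DU, both you and the paper ultimately lean on the same pillar --- the CCA-1/confidentiality guarantee of the encrypted pointers established in Theorem~1 --- but the paper does this by direct appeal (``the malicious DU has the same advantage as in $\mathcal{G}^{CCA-1}_{\mathcal{A}}(\lambda)$, since recovering $e(g,g)^{\gamma\varsigma_l}$ together with $\tilde{\mathbb{C}}_{k_l}$ and $\mathcal{D}$ is exactly the confidentiality game''), whereas you build an explicit wrapper reduction and, crucially, add a hybrid step to argue that the \emph{open} artefacts (the plaintext blocks $\Phi_i$, the multi-signatures committing to $H(\Phi_i\,\|\,\mathfrak{N}_i\,\|\,\mathfrak{pp}\,\|\,t)$, and the shuffled table layout) contribute no extra leverage. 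The paper's proof never addresses this auxiliary leakage at all --- it silently relies on the $\kappa$-indistinguishability assumption and the shuffling mechanism stated elsewhere --- so your hybrid argument fills a real gap in rigour and is the stronger treatment of that half. Where you diverge most is the honest-but-curious TDB: you argue simulatability of the TDB's transcript from $\mathfrak{pp}$ alone, while the paper instead argues \emph{requester anonymity} --- that the TDB cannot tell which DO or DU it is interacting with, because for two parties with distinct attribute sets $\mathbb{S}_{DO_x}\neq\mathbb{S}_{DO_y}$ the access structure evaluates identically, $\mathbb{A}(\mathbb{S}_{DO_x})=\mathbb{A}(\mathbb{S}_{DO_y})=1$, a property ML-ABE inherits from CP-ABE. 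That identity-of-requester facet is explicitly part of the paper's privacy threat model (Section~3) and your proof does not cover it; conversely, your data-linkage game covers the ``infer a relationship between patient and stored data'' facet more convincingly than the paper does. If you merge your hybrid/simulation argument with a short paragraph on attribute-based requester indistinguishability, you would cover the theorem's full claim.
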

\begin{proof}
In this proof, we consider attacks from a malicious DU and an honest-but-curious TDB, respectively. First of all, it is worth noting that the malicious adversary DU will have the same advantage as in $\mathcal{G}^{CCA-1}_{\mathcal{A}}(\lambda)$ when a DU tries to extend or override his/her access rights to gain additional access to the encrypted information (e.g., the embedded enciphering secret $\varsigma_{l}$). This is because such a scenario is in line with the confidentiality property. Next, let us recall that the secret relationships $\{{\mathfrak{N}_{l}}^*\}_{l\in \{1,c^*\}}$ in the ciphertext are independently encrypted with a set of different security levels $\{{k_{l}}^*\}_{l\in \{1,c^*\}}$ thanks to the use of multi-level ABE. Thus, in order to deduce any information from any part of a challenge ciphertext, or to break the indistinguishability property, the adversary DU must be able to recover $e(g, g)^{\mathrm{\gamma\varsigma}_{l}}$ together with the corresponding $\tilde{\mathbb{C}}_{k_{l}}=\mathfrak{N}_i \cdot e(g, g)^{\mathrm{\gamma\varsigma}_{l}}$ and $\mathcal{D} =g^{\frac{\gamma +\mathfrak{r}}{\delta }}$. However, the proof of Theorem 1 shows that the adversary only has a negligible advantage in selectively breaking the CCA-1 security of E-Tenon. Our framework, therefore, prevents malicious DUs from revealing any information, as ML-ABE does not disclose any useful information.

In another scenario, let us assume that the honest-but-curious TDB complies with its obligations. However, it tries to reveal which DO upload the EHR or which DU requested to retrieve the EHR. This clearly compromises the privacy property. Having said that, we show that the TDB does not have the ability to distinguish requesters by their attributes. Suppose $DO_x$ and $DO_y$ are two patients with a set of distinct attributes in the proposed system. Their $\mathbb{A}$ will be indistinguishable as ML-ABE inherits such property from the naive CP-ABE scheme, such that $\mathbb{A}(\mathbb{S}_{DO_x})=1$ and $\mathbb{A}(\mathbb{S}_{DO_y})=1$ for $\mathbb{S}_{DO_x} \neq \mathbb{S}_{DO_y}$. Therefore, the honest-but-curious TDB is unable to identify DOs and DUs. Hence, our system is secure against both internally and externally launched attacks.
\end{proof}

\begin{theorem}
Assume that the MS-BN scheme in~\cite{Bellare2006} is MU-UF-CMA secure. Then the proposed E-Tenon system is MU-UF-CMA secure with respect to the MU-UF-CMA security game and Definition \hyperref[uf-cma]{8}.
\end{theorem}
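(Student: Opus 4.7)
The plan is to prove Theorem~3 by a direct reduction: assuming a PPT forger $\mathcal{F}$ against E-Tenon's MU-UF-CMA experiment succeeds with non-negligible probability, I would construct a PPT algorithm $\mathcal{B}$ that breaks the MU-UF-CMA security of the underlying MS-BN scheme with essentially the same advantage. This mirrors the structure used for Theorem~1, but here $\mathcal{B}$ plays the role of the honest signer in the MS-BN game while simultaneously simulating E-Tenon's ACCUMULATION environment for $\mathcal{F}$.

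First I would have $\mathcal{B}$ receive its MS-BN challenge instance $(\mathfrak{pp}, \mathcal{VK}^*)$ from an external MS-BN challenger $\mathcal{C}$. $\mathcal{B}$ then augments $\mathfrak{pp}$ with the ML-ABE public parameters generated honestly (by internally running \textbf{setup}), plants $\mathcal{VK}^*$ as the verification key of the designated honest signer (either a DO or a trusted SP, according to the flavour of forgery $\mathcal{F}$ chooses to attack), and forwards the combined public parameters together with $\mathcal{VK}^*$ to $\mathcal{F}$. All other entities -- the AA, the malicious co-signers impersonated by $\mathcal{F}$, and the key material of corrupted parties -- can be simulated by $\mathcal{B}$ itself, since their keys do not depend on $\mathcal{SK}^*$. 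Note that this embedding is sound because in E-Tenon the \textbf{multiSign} routine is invoked with inputs exactly of the form $(\mathcal{SK}_i, \mathbb{V}, \mathfrak{msg})$ expected by MS-BN, where $\mathfrak{msg}$ is the already-hashed payload $H(\Phi_i\|\mathfrak{N}_i\|\mathfrak{pp}\|t)$ or $H(E_i\|\mathfrak{pp}\|t)$; the additional E-Tenon preprocessing (tokenisation, pointer encryption, data shuffling) is performed outside the signing primitive and is fully simulatable by $\mathcal{B}$ without the honest $\mathcal{SK}^*$.

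During the Attack phase, whenever $\mathcal{F}$ requests an interactive signing session with the honest party on some message $\mathfrak{msg}_k$, $\mathcal{B}$ relays the three rounds of MS-BN traffic (nonce commitments $t_i$, openings $\mathcal{RC}_i$, partial signatures $\mathcal{MS}_i$) verbatim to and from its own MS-BN signing oracle. Any ML-ABE ciphertexts, pointer decryption requests, or TDB reads needed to build $\mathfrak{msg}_k$ are handled locally by $\mathcal{B}$ with the honestly generated ML-ABE keys, so from $\mathcal{F}$'s view the simulation is perfectly indistinguishable from a real execution of E-Tenon. When $\mathcal{F}$ eventually outputs a forgery $(\mathbb{V}, \mathfrak{msg}^*, \sigma^*)$ with $\mathcal{VK}^*\in\mathbb{V}$ and $\mathfrak{msg}^*$ fresh, $\mathcal{B}$ submits the very same triple as its own MS-BN forgery; by Definition~8, a successful $\mathcal{F}$ against E-Tenon therefore yields a successful $\mathcal{B}$ against MS-BN, so $\mathrm{Adv}^{MU\text{-}UF\text{-}CMA}_{\mathcal{F},\text{E-Tenon}}(\lambda) \le \mathrm{Adv}^{MU\text{-}UF\text{-}CMA}_{\mathcal{B},\text{MS-BN}}(\lambda) \le \epsilon$.

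The main obstacle I expect to address carefully is the freshness condition on $\mathfrak{msg}^*$. Because E-Tenon signs a hash that binds the EHR block (or its ciphertext $E_i$), the owner's pointer $\mathfrak{N}_i$, the public parameters, and a timestamp $t$, I would argue that collisions with previously queried messages reduce to collision-resistance of $H$, and that the freshness of either $\Phi_i$, $\mathfrak{N}_i$, or $t$ in the forgery attempt (which is necessary for the attack to be meaningful in the adversarial model of Section~3) directly implies freshness of $\mathfrak{msg}^*$ at the MS-BN layer. A secondary subtlety is that E-Tenon allows the AA to issue the signing key when no trusted SP exists, so I would split the analysis into the two cases identified in \textbf{keyGeneration} and observe that in both cases $\mathcal{SK}^*$ remains information-theoretically hidden from $\mathcal{F}$, preserving the validity of the reduction. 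Concluding, the negligibility of $\mathrm{Adv}^{MU\text{-}UF\text{-}CMA}_{\mathcal{B}}$ under the assumed security of MS-BN yields the theorem.
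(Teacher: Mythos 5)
Your proposal is correct, but it takes a genuinely different route from the paper. You give the textbook modular reduction: embed the MS-BN challenge key $\mathcal{VK}^*$ as the honest signer's key, simulate the rest of E-Tenon (ML-ABE keys, preprocessing, pointer encryption, the hashed message format $H(\Phi_i\|\mathfrak{N}_i\|\mathfrak{pp}\|t)$, and the two \textbf{keyGeneration} modes) around it, relay the three-round signing traffic to the MS-BN oracle, and forward the forgery --- so a successful E-Tenon forger directly yields a successful MS-BN forger, which is exactly the implication the theorem statement asserts. The paper instead does not construct this reduction at all: it unrolls Bellare and Neven's \emph{own} security argument for MS-BN, recapitulating the General Forking Lemma bounds on $\mathrm{acc}$ and $\mathrm{frk}$ and deriving the concrete advantage $\epsilon'$ and running time $t'$ of a discrete-log solver built from the forger, and then asserts in a single closing sentence that E-Tenon ``inherits'' MU-UF-CMA security. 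What the paper's route buys is a concrete-security bound tied to the DLP rather than to an abstract MS-BN advantage; what your route buys is a proof of the step that is actually specific to E-Tenon --- that the surrounding system (the ABE machinery, the hashed payload, the AA-issued keys) does not leak the honest signing key or otherwise invalidate the MS-BN guarantee --- which the paper leaves implicit. Your attention to the freshness of $\mathfrak{msg}^*$ via collision resistance of $H$ and to the two key-issuance cases addresses exactly the gaps the paper's one-line inheritance claim glosses over.
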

\begin{proof}
Let $\mathcal{F}$ be a PPT adversary running in time at most $t$ against the multi-signature algorithm. Let $q_p$ and $N$ denote the number of signing processes initiated by $\mathcal{F}$ and the number of verification keys in the set $\mathbb{V}$, respectively, and let $q_r$ be the maximum number of random oracle queries that $\mathcal{F}$ can make. 

As proved in~\cite{Bellare2006}, breaking the MS-BN model is considered to be at least as hard as the discrete logarithm problem (DLP) for an adversary $\mathcal{F}$ under the random oracle model (ROM). Below we recapitulate several important points discussed by Bellare and Neven based on their Forking Lemmas. Firstly, the accepting probability acc and the forking probability frk of $\mathcal{F}$ used in their General Forking Lemma are quantified as follows:
\begin{align*}
\mathrm{frk} &\geq \operatorname{acc} \cdot\left(\frac{\mathrm{acc}}{q}-\frac{1}{h}\right) \\
\mathrm{acc} & \geq \epsilon-\frac{\left(q_{r}+N \cdot q_{p}+1\right)^{2}}{2^{l_{0}}}-\frac{2 q_{p}\left(q_{r}+N \cdot q_{p}\right)}{2^{k}}
\end{align*}
Then, we square of the acceptance rate acc, which gives us the $\mathrm{acc}^{2}$ as below:
\begingroup
\allowdisplaybreaks
\begin{align*}
\mathrm{acc}^{2} & \geq \left( \epsilon -\frac{( q_{r} +N\cdot q_{p} +1)^{2}}{2^{l_{0}}} -\frac{2q_{p}( q_{r} +N\cdot q_{p})}{2^{k}}\right)^{2}\\
 & \geq \epsilon ^{2} -\frac{\epsilon ( q_{r} +N\cdot q_{p} +1)^{2}}{2^{l_{0}}} -\frac{\epsilon \cdot 2q_{p}( q_{r} +N\cdot q_{p})}{2^{k}}\\
 & \ \ \ \ \ \ -\frac{\epsilon ( q_{r} +N\cdot q_{p} +1)^{2}}{2^{l_{0}}} +\frac{( q_{r} +N\cdot q_{p} +1)^{4}}{\left( 2^{l_{0}}\right)^{2}}\\
 & \ \ \ \ \ \ +\frac{( q_{r} +N\cdot q_{p} +1)^{2}}{2^{l_{0}}} \cdot \frac{2q_{p}( q_{r} +N\cdot q_{p})}{2^{k}}\\
 & \ \ \ \ \ \ -\frac{\epsilon \cdot 2q_{p}( q_{r} +N\cdot q_{p})}{2^{k}} +\left(\frac{2q_{p}( q_{r} +N\cdot q_{p})}{2^{k}}\right)^{2}\\
 & \ \ \ \ \ \ +\frac{2q_{p}( q_{r} +N\cdot q_{p})}{2^{k}} \cdot \frac{( q_{r} +N\cdot q_{p} +1)^{2}}{2^{l_{0}}}\\
 & \geq \epsilon ^{2} -\frac{2\epsilon ( q_{r} +N\cdot q_{p} +1)^{2}}{2^{l_{0}}} -\frac{4\epsilon \cdot q_{p}( q_{r} +N\cdot q_{p})}{2^{k}}\\
 & \geq \epsilon ^{2} -\frac{2( q_{r} +N\cdot q_{p} +1)^{2}}{2^{l_{0}}} -\frac{4q_{p}( q_{r} +N\cdot q_{p})}{2^{k}}
\end{align*}
\endgroup
If there exists an adversary $\mathcal{F}$ who manages to win the game $\mathcal{G}^{ROM}_{\mathcal{F}}(t,q_p,q_r,N,\epsilon)$, then it implies that there is an adversary $\mathcal{F}^\prime(\epsilon^\prime,t^\prime)$ that can solve the DLP. Thus, the probability $\epsilon^\prime$ of adversary $\mathcal{F}^\prime$ successfully solving the DLP and the corresponding running time $t^\prime$ for $\mathcal{F}^\prime$ to solve the DLP are given by:
\begingroup
\allowdisplaybreaks
\begin{align*}
t^{\prime}= & \; 2 t+q_{p} t_{\mathrm{exp}}+\mathcal{O}\left(\left(q_{p}+q_{r}\right)\left(1+q_{r}+N q_{p}\right)\right)\\
\epsilon^{\prime}  \geq & \; \mathrm{ frk } \\
\geq & \operatorname{acc} \cdot\left(\frac{\mathrm{acc}}{q}-\frac{1}{h}\right)\\
\geq &\frac{\mathrm{acc}^{2}}{q}-\frac{\mathrm{acc}}{h} \\
\geq &\frac{\mathrm{acc}^{2}}{q}-\frac{1}{2^{l_{1}}} \\
\geq &\frac{\epsilon^{2}-\frac{2\left(q_{r}+N \cdot q_{p}+1\right)^{2}}{2^{l_{0}}}-\frac{4 q_p\left(q_r+N \cdot q_p\right)}{2^{k}}}{q_r+q_p}-\frac{1}{2^{l_{1}}} \\
\geq &\frac{\epsilon^{2}}{q_r+q_p}-\frac{2 q_r+16 N^{2} \cdot q_p}{2^{l_{0}}}-\frac{8 N\cdot q_p}{2^{k}}-\frac{1}{2^{l_{1}}}
\end{align*}
\endgroup
Here, $t^{\prime}$ is two times the running time $t$ required by $\mathcal{F}$ plus the time required to solve the DLP. One can argue that if there is no algorithm capable of solving the DLP. Then there is no adversary capable of breaking the security of MS-BN with any reasonable probability. Therefore, the proposed E-Tenon system is also MU-UF-CMA secure against integrity and authenticity attacks by inheriting the security properties of the Multi-Signature scheme MS-BN.
\end{proof}

\section{Performance Evaluation}\label{section:6}
In this section, we discuss the performance of the proposed model. We first compare our scheme with other competitive solutions in terms of security properties. We then evaluate the relevant computation cost of the E-Tenon in different tasks. Subsequently, we discuss the communication and storage costs of E-Tenon.

\begin{table}[ht]
    \centering
    \caption{Security properties and functionalities comparison with related works.}
    \scalebox{0.85}{
        \begin{tabular}{|c|c|c|c|c|c|c|c|c|c|c|}
            \hline
            \textbf{}                      & \textbf{SP1} & \textbf{SP2} & \textbf{SP3} & \textbf{SP4} & \textbf{SP5} & \textbf{SP6} & \textbf{SP7} & \textbf{SP8} & \textbf{SP9} & \textbf{SP10}                                                                                                                                                                                                                                                     \\

            \hline\hline
            \cite{Green2011}               & \ding{55}    & \ding{51}    & \ding{51}    & \ding{55}    & \ding{55}    & \ding{51}    & \ding{51}    & \ding{55}    & \ding{51}    & \LEFTcircle                                                                                                                                                                                                                                                       \\

            \hline
            \cite{Sun2018_EHR}             & \ding{55}    & \ding{51}    & \ding{51}    & \ding{51}    & \ding{51}    & \LEFTcircle  & \ding{51}    & \ding{55}    & \ding{51}    & \ding{55}                                                                                                                                                                                                                                                         \\

            \hline
            \cite{Camenisch2015}           & \ding{55}    & \ding{51}    & \ding{51}    & \ding{51}    & \ding{51}    & \ding{51}    & -            & \ding{55}    & \ding{55}    & \ding{55}                                                                                                                                                                                                                                                         \\

            \hline
            \cite{10.1145/2897845.2897870} & \ding{55}    & \ding{55}    & \ding{51}    & \ding{55}    & \ding{55}    & \ding{51}    & \LEFTcircle  & \ding{55}    & \ding{51}    & \LEFTcircle                                                                                                                                                                                                                                                       \\

            \hline
            \cite{Zhang2018_EHR}           & \ding{55}    & \ding{51}    & \ding{51}    & \ding{51}    & \ding{51}    & \ding{51}    & \ding{51}    & \ding{55}    & \ding{51}    & \ding{51}                                                                                                                                                                                                                                                         \\

            \hline
            \cite{Maffei2015}              & \ding{55}    & \ding{51}    & \ding{51}    & \ding{51}    & \ding{51}    & \ding{51}    & \ding{51}    & \ding{55}    & \ding{51}    & \ding{51}                                                                                                                                                                                                                                                         \\

            \hline
            Ours                           & \ding{51}    & \ding{51}    & \ding{51}    & \ding{51}    & \ding{51}    & \ding{51}    & \ding{51}    & \ding{51}    & \ding{51}    & \ding{51}                                                                                                                                                                                                                                                         \\
            \hline

            \multicolumn{11}{|c|}{\multirow{6}{*}{\begin{tabular}[c]{@{}c@{}}
                                                              \ding{51}: Fully Satisfied \ \ \ \ \ \ding{55}: Not Satisfied \ \ \ \ \ \LEFTcircle: Partially Satisfied\ \ \ \ \ -: N/A \\
                                                              \textbf{SP1}: Open Database; \textbf{SP2}: Secure-channel Free; \textbf{SP3}: Data Confidentiality;                      \\\textbf{SP4}: Data Integrity; \textbf{SP5}: Non-Repudiation; \textbf{SP6}: User Privacy; \\\textbf{SP7}: Collusion Resistance; \textbf{SP8}: Multi-level Access Control; \\\textbf{SP9}: Fine-grained Access Control; \textbf{SP10}: Process Transparency\end{tabular}}} \\
            \multicolumn{11}{|c|}{}                                                                                                                                                                                                                                                                                                                                                                                                                   \\
            \multicolumn{11}{|c|}{}                                                                                                                                                                                                                                                                                                                                                                                                                   \\
            \multicolumn{11}{|c|}{}                                                                                                                                                                                                                                                                                                                                                                                                                   \\
            \multicolumn{11}{|c|}{}                                                                                                                                                                                                                                                                                                                                                                                                                   \\
            \multicolumn{11}{|c|}{}                                                                                                                                                                                                                                                                                                                                                                                                                   \\ \hline
        \end{tabular}}
    \label{properties}
\end{table}

\begin{table*}[ht]
    \centering
    \caption{Performance benchmarking based on computation, communication and storage cost.}
    \label{costs}
    \scalebox{1}{
        \begin{tabular}{|c|c|c|c|c|}
            \hline
            \textbf{Cost}                        & \multicolumn{2}{|c|}{\textbf{E-Tenon}}                                                                                                                                                                                                           & \multicolumn{2}{|c|}{\textbf{Zhang et al.}\cite{Zhang2018_EHR}}                                                                                                                                                                                                                                                                                                                                                                                                                                                                                                                         \\

            \hline\hline
            \textbf{Signing}                     & $\mathcal{T}_\mathbf{exp}$                                                                                                                                                                                                                       & $\approx 2.34\ ms$                                              & $3\mathcal{T}_\mathbf{mult}$                                                                                                                                                                                                                   & $\approx 43.50\ ms$                                                                                                                                                                                                                                                  \\
            \hline
            \textbf{Verification}                & $\mathcal{T}_\mathbf{exp}$                                                                                                                                                                                                                       & $\approx 2.34\ ms$                                              & $\mathcal{T}_\mathbf{mult}+3\mathcal{T}_\mathbf{par}$                                                                                                                                                                                          & $\approx 25.84\ ms$                                                                                                                                                                                                                                                  \\
            \hline
            \textbf{Encryption}                  & $k\mathcal{T}_\mathbf{mult}+2(k+\mathfrak{l}_{MST})\mathcal{T}_\mathbf{exp}$                                                                                                                                                                     & $\approx 142.70 \ ms$                                           & $k\mathcal{T}_\mathbf{mult}+2 k(1+\mathfrak{l}_{AT})\mathcal{T}_\mathbf{exp}$                                                                                                                                                                  & $\approx 329.90 \ ms$                                                                                                                                                                                                                                                \\
            \hline
            \multirow{2}{*}{\textbf{Decryption}} & \multirow{2}{*}{ $\begin{array}{l}(\mathfrak{n}_{MST}+\mathfrak{l}_{AT})(2\mathcal{T}_\mathbf{par}+\mathcal{T}_\mathbf{exp}+\mathcal{T}_\mathbf{mult})\\+\mathcal{T}_\mathbf{mult}(2+m\mathfrak{n}_{MST})+\mathcal{T}_\mathbf{par}\end{array}$ } & \multirow{2}{*}{$\approx 761.28 \ ms$}                          & \multirow{2}{*}{$\begin{array}{l}(k\mathfrak{n}_{AT}+\mathfrak{l}_{AT})(2\mathcal{T}_\mathbf{par}+\mathcal{T}_\mathbf{exp}+\mathcal{T}_\mathbf{mult}) \\+\mathcal{T}_\mathbf{mult}(2+m\mathfrak{n}_{AT})+\mathcal{T}_\mathbf{par}\end{array}$} & \multirow{2}{*}{$\approx 1249.28 \ ms$}                                                                                                                                                                                                                              \\
                                                 &                                                                                                                                                                                                                                                  &                                                                 &                                                                                                                                                                                                                                                &                                                                                                                                                                                                                                                                      \\
            \hline
            \textbf{Signature}                   & $2|{ecc}|$                                                                                                                                                                                                                                       & $\approx 320 \ bits $                                           & $3|{ecc}|$                                                                                                                                                                                                                                     & $\approx 480 \ bits$                                                                                                                                                                                                                                                 \\
            \hline
            \textbf{Ciphertext}                  & $\{\left|MST\right|,2(k+\mathfrak{l}_{MST})\left|\mathbb{G}\right|\}$                                                                                                                                                                            & $\approx 3.86 \ kb$                                             & $\{k\left|AT\right|,2k(1+\mathfrak{l}_{AT})\left|\mathbb{G}\right|\}$                                                                                                                                                                          & $\approx 14.18 \ kb$                                                                                                                                                                                                                                                 \\
            \hline
            \multicolumn{5}{|c|}{\multirow{5}{*}{\begin{tabular}[c]{@{}c@{}}$\mathcal{T}_\mathbf{exp}$: cost of a modular exponentiation (2.34 ms); $\mathcal{T}_\mathbf{mult}$: cost of a multiplication (14.5 ms); $\mathcal{T}_\mathbf{par}$: cost of a bilinear pairing (3.78 ms);\\ $\mathfrak{l}$: number of external nodes - i.e. attributes in the tree (10); $\mathfrak{n}$: number of internal nodes - i.e. threshold gates in the tree (5); $|{ecc}|$: size of \\the elliptic curve (160 bits); $m$: number of child nodes of the threshold gates (5); $|MST|$: size of an aggregate access tree (160 bits);\\ $|AT|$: size of a separate access tree (160 bits); $k$: number of access tree (5); $|\mathbb{G}|$: bit length of the element in the group (1024 bits) \end{tabular}}} \\
            \multicolumn{5}{|c|}{}                                                                                                                                                                                                                                                                                                                                                                                                                                                                                                                                                                                                                                                                                                                                                                                                                                                            \\
            \multicolumn{5}{|c|}{}                                                                                                                                                                                                                                                                                                                                                                                                                                                                                                                                                                                                                                                                                                                                                                                                                                                            \\
            \multicolumn{5}{|c|}{}                                                                                                                                                                                                                                                                                                                                                                                                                                                                                                                                                                                                                                                                                                                                                                                                                                                            \\
            \multicolumn{5}{|c|}{}                                                                                                                                                                                                                                                                                                                                                                                                                                                                                                                                                                                                                                                                                                                                                                                                                                                            \\ \hline
        \end{tabular}
    }
\end{table*}

\subsection{Security Properties}
To compare security properties and functionalities, we have selected several state-of-the-art schemes (\cite{Sun2018_EHR,Green2011,Camenisch2015,Maffei2015,10.1145/2897845.2897870,Zhang2018_EHR}) for protecting EHRs and compared them on various dimensions. A summarised comparison of the security properties and characteristics of the schemes is presented in Table \hyperref[properties]{2}. Although there are wide-ranging interesting solutions, they still suffer from different shortcomings and do not work efficiently where open databases are concerned. The scheme proposed by Sun et al.~\cite{Sun2018_EHR} employs attribute-based techniques, but the patient's involvement in the encryption and signing of the data is weakened. In~\cite{Sun2018_EHR}, the patient does not have the right to specify the access policy of their own data. Also, the doctor handles the encryption and signing process, meaning that the direct control of the data is entirely in the hands of the doctor, rather than the patient. Such a design increases the advantage for malicious insiders and makes the system less trustworthy for patients. In contrast, our E-Tenon system inherently gives more control to the patients since they are the actual owner of the EHR. In this way, they can set different levels of access policies for different types of data on their own, and they are allowed to engage in the process of Multi-Signature.

Green et al.~\cite{Green2011} have attempted to reduce the user's computational overheads by outsourcing the task of decryption to an untrusted cloud service provider (CSP). In their system, the CSP transform the ciphertext of ABE into a simple El Gamal-style ciphertext based on a transformation key provided by the data user. Despite the converted ciphertext requiring lower computational cost than its initial form when recovering the plaintext, the user cannot verify that the CSP has performed the transformation operation honestly. Similarly, the scheme presented in~\cite{Camenisch2015} ensures unlinkability of the stored data by converting identifying attributes into non-sensitive pseudonyms. However, this process is not transparent, meaning the data owner cannot audit their data flow. By comparison, the data pre-processing algorithm in our system is run on the data owner's side, and there is no need for other central entities to perform any secondary processing of the uploaded EHRs. Besides, instead of using a basic form of digital signature, we utilise multi-signature technology, which allows a group of participants to co-sign the same message effectively. This naturally enables the patient (DO) and the service provider (SP) to restrain each other's dishonest behaviour. Thus, it further enhances integrity, authenticity and non-repudiation. In this regard, we emphasise that multi-signature is more promising than the standard digital signature or other techniques that involve many signatures. Because in the absence of multiple entities constraining each other, the entity accessing the EHR later can replace the EHR provided by the previous entity and continue to sign the EHR supplied by itself. Therefore, the convectional signature approach does not guarantee the authenticity of the EHR in a collaborative environment.

\begin{figure*}
    \centering
    \subfloat[Signing]{%
        \scalebox{0.65}{
            \begin{tikzpicture}
                \begin{axis}[
                        xlabel={Number of Signing Tasks},
                        ylabel={Computation Cost (\textit{ms})},
                        xmin=10, xmax=50,
                        ymin=0, ymax=2500,
                        xtick={10,20,30,40,50},
                        ytick={0,500,1000,1500,2000,2500},
                        legend pos=north west,
                        ymajorgrids=true,
                        grid style=dashed,
                    ]

                    \addplot[
                        color=blue,
                        mark=triangle,
                    ]
                    coordinates {
                            (10,23.4)(20,46.8)(30,70.2)(40,93.6)(50,117)
                        };

                    \addplot[
                        color=red,
                        mark=square,
                    ]
                    coordinates {
                            (10,435)(20,870)(30,1305)(40,1740)(50,2175)
                        };

                    \legend{E-Tenon, Zhang et al.\cite{Zhang2018_EHR}}

                \end{axis}
            \end{tikzpicture}}}
    \subfloat[Verification]{%
        \scalebox{0.65}{
            \begin{tikzpicture}
                \begin{axis}[
                        xlabel={Number of Verification Tasks},
                        ylabel={Computation Cost (\textit{ms})},
                        xmin=10, xmax=50,
                        ymin=0, ymax=1400,
                        xtick={10,20,30,40,50},
                        ytick={0,200,400,600,800,1000,1200,1400},
                        legend pos=north west,
                        ymajorgrids=true,
                        grid style=dashed,
                    ]

                    \addplot[
                        color=blue,
                        mark=triangle,
                    ]
                    coordinates {
                            (10,23.4)(20,46.8)(30,70.2)(40,93.6)(50,117)
                        };

                    \addplot[
                        color=red,
                        mark=square,
                    ]
                    coordinates {
                            (10,258.4)(20,516.8)(30,775.2)(40,1033.6)(50,1292)
                        };

                    \legend{E-Tenon, Zhang et al.\cite{Zhang2018_EHR}}

                \end{axis}
            \end{tikzpicture}}}
    \subfloat[Encryption]{%
        \scalebox{0.65}{
            \begin{tikzpicture}
                \begin{axis}[
                        xlabel={Number of Attributes},
                        ylabel={Computation Cost (\textit{ms})},
                        xmin=10, xmax=50,
                        ymin=0, ymax=1400,
                        xtick={10,20,30,40,50},
                        ytick={0,200,400,600,800,1000,1200,1400},
                        legend pos=north west,
                        ymajorgrids=true,
                        grid style=dashed,
                    ]

                    \addplot[
                        color=blue,
                        mark=triangle,
                    ]
                    coordinates {
                            (10,142.7)(20,189.5)(30,236.29)(40,283.1)(50,329.9)
                        };

                    \addplot[
                        color=red,
                        mark=square,
                    ]
                    coordinates {
                            (10,329.9)(20,563.9)(30,797.9)(40,1031.9)(50,1265.89)
                        };

                    \legend{E-Tenon, Zhang et al.\cite{Zhang2018_EHR}}

                \end{axis}
            \end{tikzpicture}}}
    \\
    \subfloat[Decryption]{%
        \scalebox{0.65}{
            \begin{tikzpicture}
                \begin{axis}[
                        scaled y ticks = false,
                        xlabel={Number of Decryption Tasks},
                        ylabel={Computation Cost (\textit{ms})},
                        xmin=10, xmax=50,
                        ymin=0, ymax=70000,
                        xtick={10,20,30,40,50},
                        ytick={0,10000,20000,30000,40000,50000,60000,70000},
                        legend pos=north west,
                        ymajorgrids=true,
                        grid style=dashed,
                    ]

                    \addplot[
                        color=blue,
                        mark=triangle,
                    ]
                    coordinates {
                            (10,7612.8)(20,15225.6)(30,22838.4)(40,30451.2)(50,38064)
                        };

                    \addplot[
                        color=red,
                        mark=square,
                    ]
                    coordinates {
                            (10,12492.8)(20,24985.6)(30,37478.4)(40,49971.2)(50,62464)
                        };

                    \legend{E-Tenon, Zhang et al.\cite{Zhang2018_EHR}}

                \end{axis}
            \end{tikzpicture}}}
    \subfloat[Signature]{%
        \scalebox{0.65}{
            \begin{tikzpicture}
                \begin{axis}[
                        scaled y ticks = false,
                        xlabel={Number of Signatures},
                        ylabel={Communication and Storage Cost (\textit{bit})},
                        xmin=10, xmax=50,
                        ymin=0, ymax=24000,
                        xtick={10,20,30,40,50},
                        ytick={0,4000,8000,12000,16000,20000,24000},
                        legend pos=north west,
                        ymajorgrids=true,
                        grid style=dashed,
                    ]

                    \addplot[
                        color=blue,
                        mark=triangle,
                    ]
                    coordinates {
                            (10,3200)(20,6400)(30,9600)(40,12800)(50,16000)
                        };

                    \addplot[
                        color=red,
                        mark=square,
                    ]
                    coordinates {
                            (10,4800)(20,9600)(30,14400)(40,19200)(50,24000)
                        };

                    \legend{E-Tenon, Zhang et al.\cite{Zhang2018_EHR}}

                \end{axis}
            \end{tikzpicture}}}
    \subfloat[Ciphertext]{%
        \scalebox{0.65}{
            \begin{tikzpicture}
                \begin{axis}[
                        xlabel={Number of Ciphertexts},
                        ylabel={Communication and Storage Cost (\textit{kb})},
                        xmin=10, xmax=50,
                        ymin=0, ymax=800,
                        xtick={10,20,30,40,50},
                        ytick={0,100,200,300,400,500,600,700,800},
                        legend pos=north west,
                        ymajorgrids=true,
                        grid style=dashed,
                    ]

                    \addplot[
                        color=blue,
                        mark=triangle,
                    ]
                    coordinates {
                            (10,38.6)(20,77.2)(30,115.8)(40,154.4)(50,193)
                        };

                    \addplot[
                        color=red,
                        mark=square,
                    ]
                    coordinates {
                            (10,141.8)(20,283.6)(30,425.4)(40,567.2)(50,709)
                        };

                    \legend{E-Tenon, Zhang et al.\cite{Zhang2018_EHR}}

                \end{axis}
            \end{tikzpicture}}}
    \caption{Performance comparison based on computation, communication and storage cost (using simulation parameters specified in Table \hyperref[costs]{3}).}
    \label{fig8}
\end{figure*}
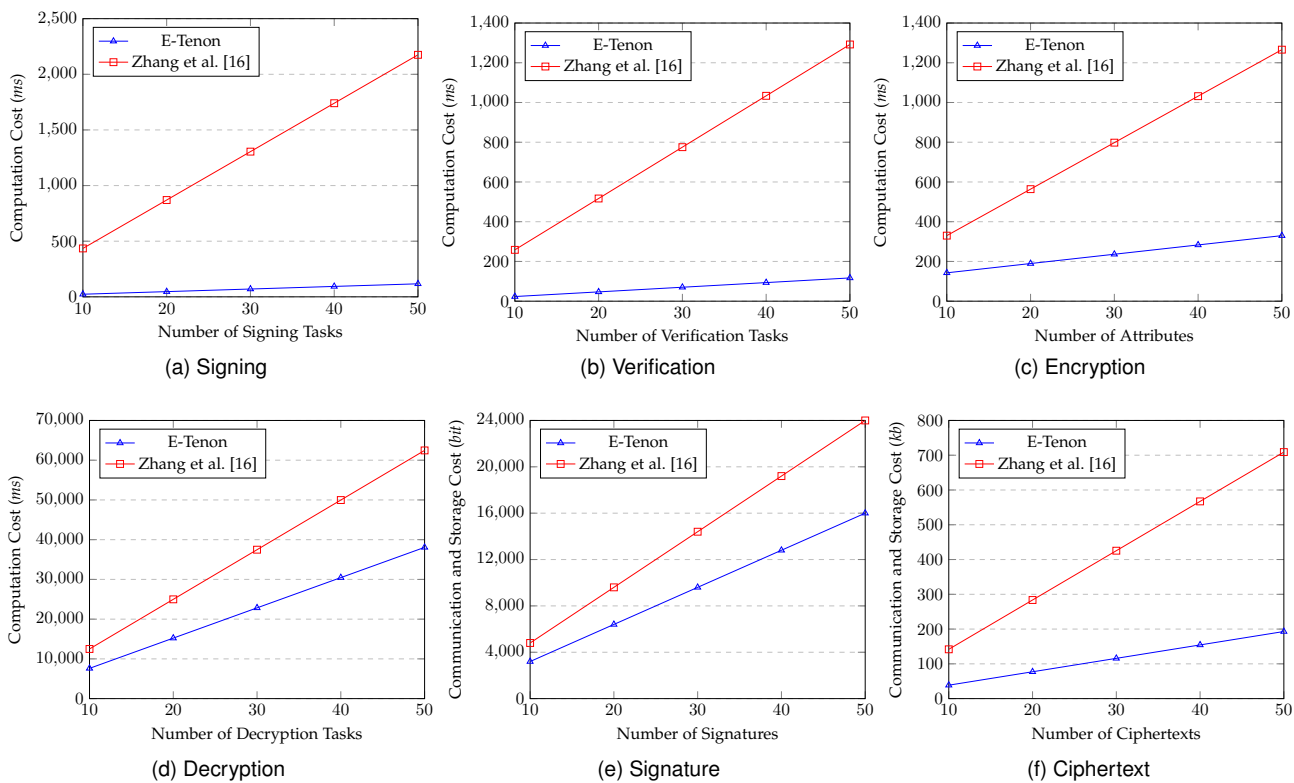

Furthermore, Huang et al.'s solution \cite{10.1145/2897845.2897870} focuses specifically on EHR confidentiality, although their solution is not security channel-free and we found no discussion of how they ensure EHR integrity, which makes their solution slightly less than perfect in our comparison. However, system proposed by Zhang et al.~\cite{Zhang2018_EHR} (SSH) and system proposed by Maffei et al.~\cite{Maffei2015} (GORAM) satisfied most of the security properties. GORAM allows data owners to share their data stored in the cloud selectively, and the storing entity is not permitted to inspect any data. Nevertheless, the strong security they have achieved comes at the cost of increasing the ciphertext size and slowing down encryption and decryption.

Finally, we observe that none of those mentioned above schemes can be applied to public databases where most of the data is stored in plaintext, and none of the encryption methods used in these schemes can efficiently implement multi-level access control. On the contrary, thanks to the novel concept of E-Tenon, our data is securely stored in an open database (TDB), which means the computational overhead on encryption and decryption is minimal compared to solutions based on heavy encryption.

\subsection{Computation Cost}
We use a virtual machine (Ubuntu 12.04) with an Intel Core i5-4200M dual-core 2.50 GHz CPU to conduct simulations of the core operations based on three main libraries: JPBC library Pbc-05.14~\cite{pbc0514}, JCE library~\cite{jce} and Apache Commons IO library~\cite{uuid}. We test modular exponentiation,  multiplication and bilinear pairing for 2,000 times and take the average CPU time in milliseconds. Table \hyperref[costs]{3} shows the cost for signing, verification, and encryption and decryption. Firstly, the signing and verification algorithms adapted in our model outperform other the relevant algorithms in the state-of-the-art schemes~\cite{sig_Lu,Boldyreva2002,Zhang2018_EHR}. This is because there is only one exponentiation operation required when an entity signs/verifies the message (the average CPU time for 2000 trials is approximately equal to only 2.34 $ms$). In addition, since it is a practical requirement to protect different types of EHR data according to different levels of security, our system uses ML-ABE's aggregated master access structure to meet this requirement effectively. It is worth noting that the schemes built on the classic CP-ABE (e.g. \cite{Zhang2018_EHR}) need to create a separate access structure for each defined security level $\{k_{l}\}_{l\in \{1,c\}}$ in order to achieve the same security functionality as we have. However, using multiple access structures will inevitably create many duplicate attributes. So our system saves computational overhead by avoiding duplicate nodes and unnecessary polynomials in the access structure, such that $\sum _{l=1}^{c}\mathfrak{l}_{AT_{k_{l}}} \geq \mathfrak{l}_{MST}$ ($\mathfrak{l}$ denotes the number of attributes/external nodes). A more intuitive comparison of performance with \cite{Zhang2018_EHR} is visualised in Fig. \hyperref[fig8]{8}. Furthermore, the advantages of our approach can also be seen in the following scenario. It is common knowledge that the size of EHR can vary from a few bits to tens or even hundreds of megabytes (e.g., 100 bits - 100 MB). However, we are only encrypting relationships between different EHR blocks, that is, instead of encrypting the whole EHR data, we only encrypt a number of constant sized pointers (16 Bytes). This idea reduces the time taken for encryption and decryption considerably, thanks to the use of the \emph{electronic tenon structure}.

\subsection{Communication and Storage Cost}
Finally, we analyse the communication and storage costs of the proposed protocol. As mentioned above, the access structure used by E-Tenon is designed in an aggregated manner, and the cost of our scheme in terms of communication and storage is optimised by eliminating duplicate attributes. This implies that the size of the ciphertext in the E-Tenon system is shorter than other schemes with a series of separate access structures.
However, our protocol requires an extra round of communication during the signing process as compared to other schemes, which is a trade-off for supporting concurrent signing in the multi-user environment, as pointed out in MS-BN~\cite{Bellare2006}. That being said, the size of our signature is only $2|{ecc}|$ (note that different schemes may work over a different n-bit elliptic curve). Following the security discussion in~\cite{Koblitz2000}, the use of a 160-bit elliptic curve would provide about the equivalent security level as DSA (Digital Signature Algorithm) and RSA (Rivest–Shamir–Adleman) with a 1024-bit modulus. Therefore, let us assume that we currently require the same level of security as stated above. The size of the multiple signature $\sigma$ is only 320 bits (40 Bytes) in this case. Taken together, the discussion suggests that we have achieved more secure and reliable protection of EHR without compromising efficiency.

\section{Conclusion and Future Work}\label{section:7}
This paper proposed an efficient privacy-preserving open data sharing scheme for a secure EHR system. The idea of keeping most of the data open without compromising security and privacy is considered as a novel attempt in this field. Moreover, we presented in detail the effective integration of two promising technologies in our E-Tenon system: ML-ABE and Multi-Signature in the direction of protecting security of EHR and patient privacy. Our solution exploits the advantages of ABE for key management and multiple signatures for protecting the authenticity and integrity of EHR. The multi-level security supported by ML-ABE allows us to protect the relationships between EHR independently blocks with different levels of security, where only legitimate DU with appropriate attributes can decrypt a certain number of pointers and join the open data in a sensible way. These not only improve the security of EHR, but also grant patients the ability to share EHRs efficiently. In addition, with the formal security analysis, our solutions have been proven to be capable of preventing a range of possible security attacks. Finally, we have analysed the costs and performance of the E-Tenon system in various aspects. The simulation results show that our E-Tenon system does not compromise any security properties while maintaining promising efficiency and flexibility. In our future work, we aim to explore the use of Fog Computing technology to enhance our solutions in a way that can further improve efficiency (e.g., by outsourcing part of the EHR preprocessing and decryption tasks to edge devices) while maintaining strong security.

\ifCLASSOPTIONcaptionsoff
  \newpage
\fi



\bibliographystyle{IEEEtran}
\bibliography{reference}
%



%






\end{document}